\documentclass[journal,draftcls,onecolumn,12pt,twoside]{IEEEtranTCOM}

\usepackage{moreverb}
\usepackage{epsfig}
\usepackage{amsmath,amssymb,amsthm,mathrsfs,amsfonts,dsfont}
\usepackage{adjustbox,lipsum}
\usepackage{algorithm,algorithmic}
\usepackage{amsfonts}
\usepackage{epsfig}
\usepackage{amssymb}
\usepackage{amsmath}
\usepackage{amsthm}
\usepackage{multirow}
\usepackage{setspace}
\usepackage{rotating}
\usepackage{graphicx}
\usepackage{tabularx}
\usepackage{array}
\usepackage{anyfontsize}
\usepackage{color,soul}
\usepackage{graphicx,dblfloatfix}
\usepackage{epstopdf}
\usepackage{blindtext}
\usepackage{amsmath}
\usepackage{amsthm,amssymb,amsmath,bm}
\usepackage{subfigure}
\usepackage{amsfonts}
\usepackage{epsfig}
\usepackage{amssymb}
\usepackage{amsmath}
\usepackage{cite}
\hyphenation{op-tical net-works semi-conduc-tor}
\usepackage{graphicx}
\usepackage{fancyhdr}
\usepackage[font={small}]{caption}
\usepackage{tabularx}
\usepackage{tcolorbox}
\usepackage{cite}
\usepackage{setspace}

\newtheorem{theorem}{Theorem}
\newtheorem{lemm}{Lemma}

\newtheorem{corollary}{Corollary}

\newtheorem{rem}{Remark}
\newtheorem{Pro}{Proposition}


\sloppy
\allowdisplaybreaks
\setlength{\abovedisplayskip}{8pt}
\setlength{\belowdisplayskip}{8pt}

\begin{document}

\title{Moment Generating Function of Age of Information in Multi-Source M/G/1/1 Queueing Systems
\thanks{Mohammad Moltafet and Markus Leinonen are with the Centre
for Wireless Communications--Radio Technologies, University of Oulu,
90014 Oulu, Finland (e-mail: mohammad.moltafet@oulu.fi; markus.leinonen@oulu.fi), and
Marian Codreanu is with the Department of Science and Technology, Link\"{o}ping University, Sweden (e-mail: marian.codreanu@liu.se).
}
\author{
Mohammad~Moltafet, Markus~Leinonen, and Marian~Codreanu
}
}
\maketitle

\vspace{-15mm}
\begin{abstract}
We consider a multi-source status update system, where each source generates status update packets according to a Poisson process which are then served according to a generally distributed service time. For this multi-source M/G/1/1 queueing model, we introduce a source-aware preemptive packet management policy and derive the moment generating functions (MGFs) of the age of information (AoI) and peak AoI of each source. According to the policy, an arriving fresh packet preempts the possible packet of the same source in the system. Furthermore, we derive the MGFs of the AoI and peak AoI for the source-agnostic preemptive and non-preemptive policy, for which only the average AoI and peak AoI have been derived earlier. Finally, we use the MGFs to derive the average AoI and peak AoI in a two-source M/G/1/1 queueing model under each policy. Numerical results show the effect of the service time distribution parameters on the average AoI: for a given service rate, when the tail of the service time  distribution is sufficiently heavy, the source-agnostic preemptive policy is the best policy, whereas for a sufficiently light tailed distribution, the non-preemptive policy is the best policy. The results also highlight the importance of higher moments of the AoI.

\emph{Index Terms--}  AoI, packet management, moment generating function (MGF), multi-source queueing model, M/G/1/1. 

\end{abstract}

\sloppy 

\section{Introduction}\label{Introduction}
Timely delivery of the status updates of various real-world physical processes plays a critical role in enabling the time-critical Internet of Things (IoT) applications.	The age of information (AoI)  was first introduced in the seminal work \cite{5984917} as a destination-centric metric to measure the information freshness in status update systems.
A status update packet contains the measured value of a monitored process and a time stamp representing the time at which the sample was generated. Due to wireless channel access, channel errors, fading, etc. communicating a status update packet through the network experiences a random delay. If at a time instant $t$, the most recently received status update packet contains the time stamp $U(t)$, AoI is defined
as the random process $\Delta(t)=t-U(t)$.
Thus, the AoI measures for each source node the time elapsed since the last received status update packet was generated at the source node.

The first queueing theoretic work on AoI is \cite{6195689} where the authors derived the average AoI for M/M/1, D/M/1, and M/D/1 first-come first-served (FCFS) queueing models. 
In \cite{6875100}, the authors proposed peak AoI as an alternative  metric to evaluate the information freshness.  	The work in \cite{6284003} was the first to investigate the AoI in a multi-source setup in which the authors derived an approximate expression for the average AoI
in a multi-source M/M/1 FCFS queueing model.

It has been shown that an appropriate packet management policy -- in the waiting queue or/and server -- has a great potential to improve the information freshness in status update systems \cite{6310931,7415972}. The average  AoI for an  M/M/1 last-come first-served  (LCFS) queueing model with preemption was analyzed in \cite{6310931}. The average AoI and average peak AoI for three packet management policies named M/M/1/1, M/M/1/2, and M/M/1/$2^*$ were derived in \cite{7415972}.  
The seminal work \cite{8469047} introduced the stochastic hybrid systems (SHS) technique to calculate the average AoI. In \cite{9103131}, the authors extended the SHS analysis to calculate the moment generating function (MGF) of the AoI.  
The SHS technique has been used to analyze the AoI in various queueing models \cite{8437591,8406966,8437907,9013935,9048914,9252168,9162681,Moltafet2020mgf,9611498,Moltafet_thesis}.
The authors of \cite{8437591} considered  a multi-source  queueing model in which sources have different priorities and derived the average AoI for two priority based packet management policies. In  \cite{8406966}, the author derived  the average AoI for a single-source status update system in which the updates follow a route through a series of network nodes where each node has an LCFS queue that supports preemption in service.   
The work  \cite{8437907} derived the average AoI in a single-source  queueing model with multiple servers with preemption in service. 
In  \cite{9013935}, the authors derived the average AoI in a multi-source LCFS queueing model with multiple servers that employ \textit{source-agnostic} preemption in service. According to the source-agnostic preemptive policy, the packets of different sources can preempt each other.
The work in \cite{9048914} derived the average AoI in a multi-source system under a \textit{source-aware} preemptive packet management policy and packet delivery errors.  According to the source-aware preemptive packet management policy, when a packet arrives, the possible packet of the \textit{same source} in the system is replaced by the fresh packet.
The authors of   \cite{9252168,9162681}  derived the average AoI for a multi-source M/M/1 queueing model under various preemptive and non-preemptive packet management policies. In \cite{Moltafet2020mgf}, the authors derived the MGF of the AoI for a multi-source M/M/1 queueing model under various packet management policies.
The authors of \cite{9611498} assumed  that the status update packets received at the sink need further processing before being used and derived the MGF of the AoI for such a two-server tandem queueing system. 


Besides exponentially distributed service time and Poisson process arrivals,  AoI has also been studied under various arrival processes and service time distributions in both single-source and multi-source systems. In \cite{9099557}, the authors derived various approximations for the average AoI in a multi-source M/G/1 FCFS queueing model. The work in \cite{9119460} derived the distribution of the AoI and peak AoI for the single-source PH/PH/1/1 and M/PH/1/2 queueing models. 
The authors of \cite{8406909} analyzed the AoI in a single-source D/G/1 FCFS  queueing model. 
The authors of \cite{8006504}  derived a closed-form expression for the average
AoI of a single-source  M/G/1/1 preemptive queueing model with hybrid automatic repeat request.
The stationary distributions of the AoI and peak AoI of single-source  M/G/1/1 and G/M/1/1 queueing models were derived in \cite{8006592}.
In \cite{8820073}, the authors derived a general formula for the stationary distribution of the AoI in  single-source single-server queueing systems. 
The work in \cite{7541764} considered a single-source LCFS queueing model where the packets arrive according to a Poisson process and the service time follows a gamma distribution. They derived the average AoI and average peak AoI for two packet management policies: LCFS  with the source-agnostic preemptive and non-preemptive. 
According to the non-preemptive policy, when the server is busy any arriving packet is blocked and cleared.
The work in \cite{soysal2019age,9048933} derived the average AoI expression for a single-source G/G/1/1 queueing model under two packet management policies. 
The authors of \cite{7282742} considered a multi-source M/G/1 queueing system and optimized the arrival rates of each source to minimize the peak AoI. 
The average AoI and average peak AoI for a multi-source M/G/1/1 queueing model under the source-agnostic preemption policy were derived in \cite{8406928}.  	
In \cite{8886357}, the authors derived the average AoI for a queueing system with two classes of Poisson arrivals with different priorities under a general service time distribution. They assumed that the system can contain at most one packet and a newly arriving packet replaces the possible currently-in-service packet with the same or lower priority. 
The average AoI and average peak AoI for a multi-source M/G/1/1 queueing model under the source-agnostic non-preemptive policy were derived in \cite{9500775}.

In this work, we consider a multi-source M/G/1/1 queueing system and derive the MGFs of the AoI and peak AoI under three packet management policies, namely, i) source-aware  preemptive policy, ii) source-agnostic preemptive policy \cite{8406928}, and iii) non-preemptive policy \cite{9500775}.  
The capacity of the system is one packet (i.e., there is no waiting buffer).
According to the source-aware preemptive policy, when a packet arrives, the possible packet of \emph{the same source} in the system is replaced by the fresh packet. According to the source-agnostic preemptive policy, a new arriving packet preempts the possible packet in the system regardless of its source index.
According to the non-preemptive  policy,  when the server is busy, any arriving packet is blocked and cleared.
By using the MGFs of the AoI and peak AoI, the average AoI and average peak AoI in a two-source M/G/1/1 queueing system under the three policies are derived. 
The numerical results  show that, depending on the system parameters, the proposed source-aware preemptive packet management policy can outperform the source-agnostic preemptive and non-preemptive policy proposed in \cite{8406928} and \cite{9500775}, respectively, from the perspective of average AoI. In addition, they show the importance of higher moments of the AoI by investigating the standard deviation of the AoI under each policy.

\subsection{Contributions}
The main contributions of this paper are summarized as follows:
\begin{itemize}
\item We introduce a source-aware preemptive packet management policy for a multi-source M/G/1/1 queueing system and derive the MGFs of the AoI and peak AoI under the policy. 
\item 
As an extension of \cite{8406928} and \cite{9500775}, where only the average AoI and peak AoI were derived, we derive the MGFs of the AoI and peak AoI under the source-agnostic preemptive and non-preemptive packet management policies.
\item  By using the MGFs of the AoI and peak AoI, we derive the average AoI and average peak AoI in a two-source M/G/1/1 queueing system under the source-aware preemptive, source-agnostic preemptive, and non-preemptive policies.
\item We numerically investigate the standard deviation of the AoI under the policies and show that the average AoI is not sufficient to rigorously evaluate the information freshness of a status update system for a given packet management policy. 
\item  The numerical results  show that  the average AoI performance of the packet management policies depends on the service time distribution parameters: for a given service rate, when the tail of the service time  distribution is heavy enough,  the source-agnostic preemptive policy is the best policy, and when the tail of the distribution is light enough,  the non-preemptive policy is the best one.

\end{itemize}

\subsection{Organization}
The paper is organized as follows. The system model and summary of the main results are presented in Section \ref{System Model and Summary of Results}. Calculation of the MGFs of the AoI and  peak AoI is presented in Section \ref{Calculation of the MGF of the AoI under the packet management policies}. 
Numerical results are presented in Section \ref{Numerical Results}. Finally, concluding remarks are made in Section \ref{Conclusions}.

\section{System Model and Main Results}\label{System Model and Summary of Results}
We  consider a status update system consisting of  a set of independent sources denoted by $\mathcal{C}=\{1,\dots,C\}$,  one server, and one sink, as depicted in Fig.~\ref{Model}.
Each source is assigned to send status information about a random process to the sink. Status updates are transmitted as packets, containing the measured value of the monitored process and a time stamp representing the time when the sample was generated. We assume that  the packets  of source  $c\in\mathcal{C}$  are generated according to the   Poisson process with the rate $\lambda_c$. Since packets of each source are generated according to a Poisson process and the sources are independent, the packet generation in the system follows
the Poisson process with rate $\lambda=\sum_{c'\in\mathcal{C}}\lambda_c'$.
The server serves the packets according to a generally distributed service time with rate $\mu$. We assume that the service times of  packets are independent
and identically distributed (i.i.d.) random variables following
a general distribution. 
Finally, we consider that the capacity of the system is one (i.e., there is no waiting buffer) and thus, the considered setup is referred to as a multi-source M/G/1/1 queueing system. 
\begin{figure}
\centering
\includegraphics[width=.45\linewidth,trim = 0mm 0mm 0mm 0mm,clip]{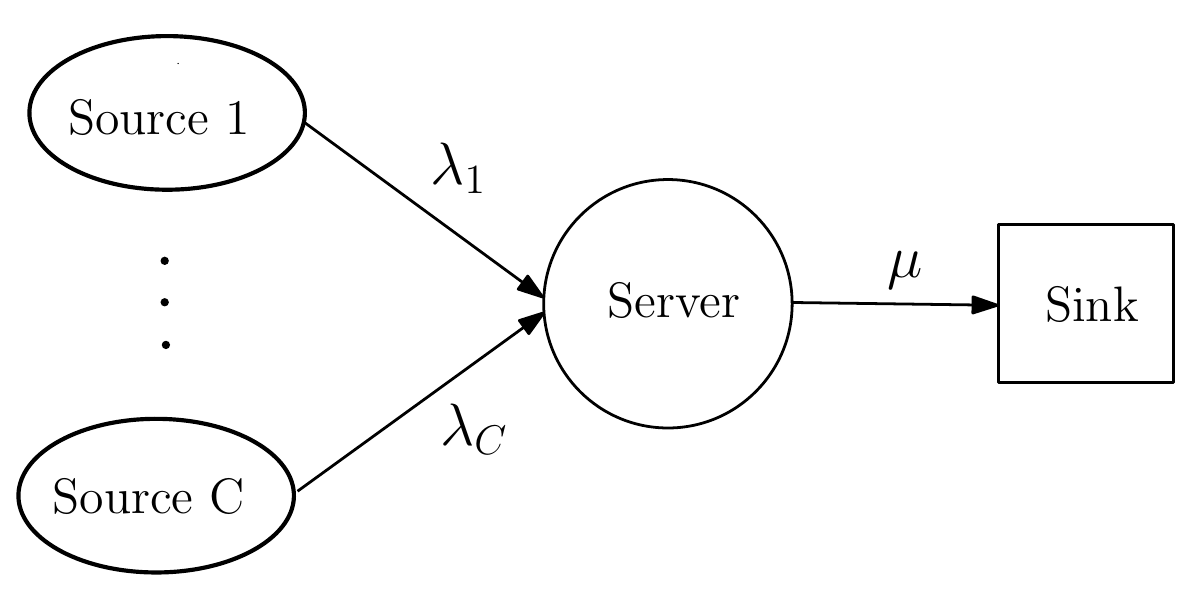}
\caption{The considered multi-source M/G/1/1 queueing system.}  
\label{Model}
\vspace{-10mm}		
\end{figure}

\subsection{Packet Management Policies}
In this paper, we study the following three packet management policies:

\textbf{Source-Aware Preemptive Policy:} According to this policy, a new arriving packet preempts the possible packet of the same source in the system. Whenever the new arriving packet finds a packet of another source under service, the arriving packet is blocked and cleared. 

\textbf{Source-Agnostic Preemptive Policy \cite{8406928}:} According to this policy, a new arriving packet preempts the possible packet in the system regardless of its source index.

\textbf{Non-Preemptive Policy \cite{9500775}:} According to this policy, when the server is busy at the arrival instant of a packet, the arriving packet is blocked and cleared. 




\subsection{AoI Definition}
For each source, the AoI at the sink is defined as the time elapsed since the last  successfully received packet was generated. 
Formally, let $t_{c,i}$ denote the time instant at which the $i$th delivered status update packet of source $c$ was generated, and let $t'_{c,i}$ denote the time instant at which this packet  arrives at the sink. Let $\bar t_{c,i}$ denote the generation time of the $i$th packet of source $c$ that does not complete service because of the packet management policy (i.e., the packet is either preempted by another packet or it is blocked and cleared). An example of the evolution of the AoI in a two-source system under the source-aware preemptive packet management policy is shown in Fig.~\ref{Sawtooth}.

\begin{figure}
\centering
\includegraphics[width=.9\linewidth,trim = 0mm 0mm 0mm 0mm,clip]{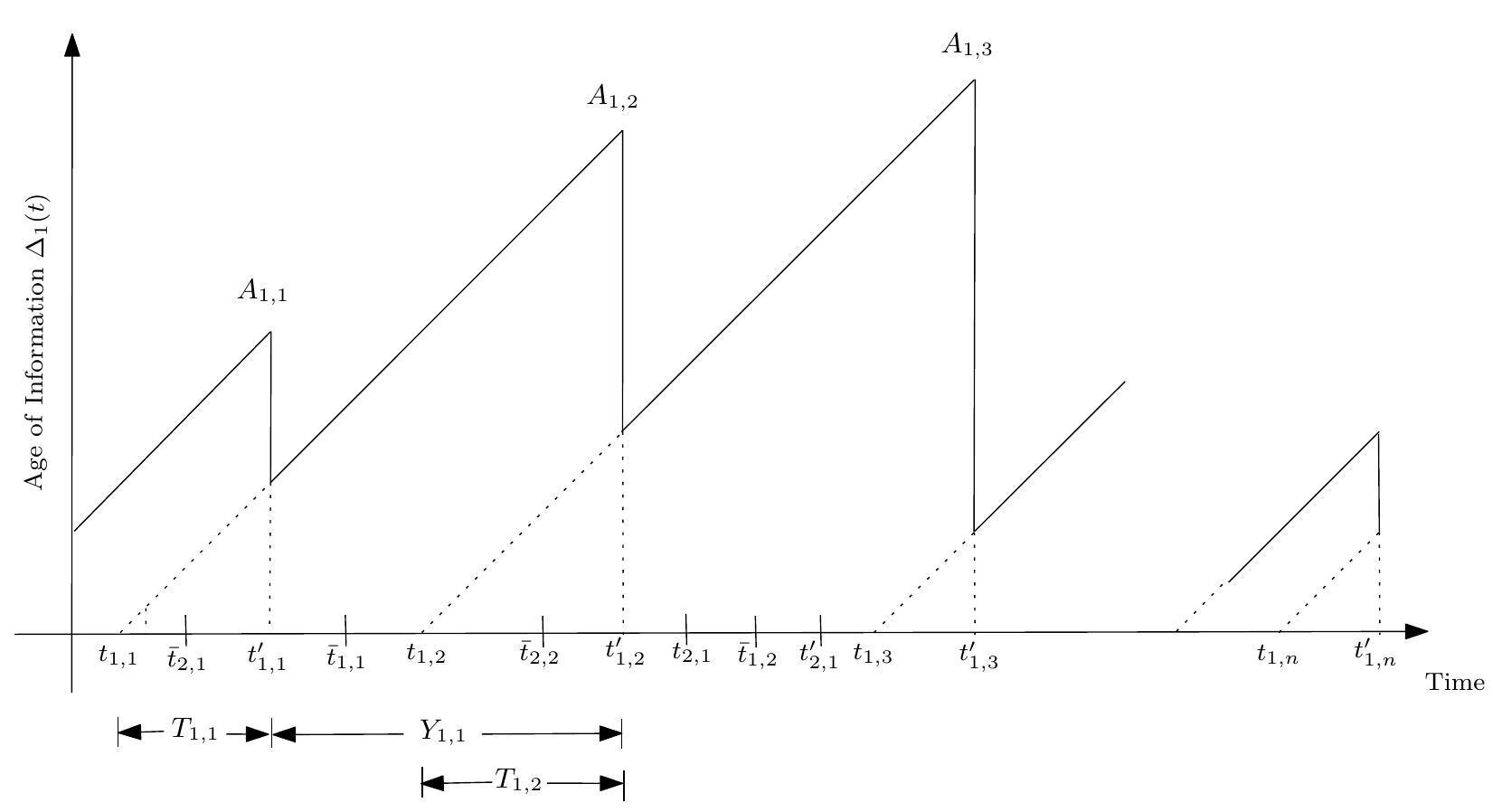}\vspace{-3mm}
\caption{An example of the evolution of the AoI of source 1 in a two-source system under the source-aware preemptive packet management policy. The first packet of source 1 is generated at time instant $t_{1,1}$ and this packet is delivered to the sink at time instant $t'_{1,1}$. The first packet of source 2 arrives at the system at time instant $\bar t_{2,1}$; however, because the server is serving a source 1 packet, the arrived source 2 packet is blocked and cleared. At time instant $\bar t_{1,1}$, a source 1 packet arrives at the empty system and starts the service; however, this packet is replaced by the new packet of source 1 arriving at time instant $t_{1,2}$.}  
\label{Sawtooth}
\vspace{-10mm}			
\end{figure}

At a time instant $\tau$,  the  index of the most recently received packet of source $c$ is given by
\begin{equation}\label{mnb00}
N_c(\tau)=\max\{i'\mid t'_{c,i'}\le \tau\},
\end{equation}
and the time stamp of the most recently received packet of source $c$ is
$
U_c(\tau)=t_{c,N_c(\tau)}.
$
The AoI of source $c$ at the sink is defined as the random process
$
{\delta_{c}(t)=t-U_c(t).}
$

Let the random variable
\begin{equation}\label{mn120}
Y_{c,i}=t'_{c,i+1}-t'_{c,i}
\end{equation}
represent the $i$th interdeparture time of source $c$, i.e., the time elapsed between the departures of $i$th and $i+1$th (delivered) packets from source $c$. From here onwards, we refer to the $i$th delivered packet from source $c$ simply as ``packet $c,i$''. Moreover, let the random variable
\begin{equation}\label{mn1201}
T_{c,i}=t'_{c,i}-t_{c,i}
\end{equation}
represent the system time of packet $c,i$, i.e., the duration this (delivered) packet spends in the system. 

One of the most commonly used metrics for evaluating the
AoI of a source at the sink is the peak AoI \cite{6875100}.
The peak AoI of  source $c$ at the sink  is defined as the value of the AoI immediately  before  receiving  an update packet.
Accordingly, the peak AoI concerning the $i$th successfully received packet of source $c$,  denoted by $A_{c,i}$ (see  Fig.\ \ref{Sawtooth} for the source-aware preemptive policy), is given by
\begin{align}\label{AoI.eq}
A_{c,i}= Y_{c,i-1}+T_{c,i-1}.
\end{align}

We assume that the considered status update system is stationary so that $T_{c,i}=^{\mathrm{st}}T_c$,  $Y_{c,i}=^{\mathrm{st}}Y_c$, and $A_{c,i}=^{\mathrm{st}}A_c,\forall i$, where ${=^{\mathrm{st}}}$ means stochastically identical (i.e., they have an identical marginal distribution). We further assume that the AoI process for each source is ergodic. 
%

Next, the main results of the paper are presented. The results are valid for any service time distribution under the three packet management policies. 

\subsection{Summary of the Main Results}\label{Summary of the Main Results}
The MGFs of the AoI and peak AoI of source $c$ in a multi-source M/G/1/1 queueing system under each of the three packet management policies are given by the following three theorems; the proofs of Theorems \ref{T_source-aware}, \ref{T_preemptive}, and  \ref{T_blocking} are presented in Section~\ref{Calculation of the MGF of the AoI under the packet management policies}. 

Let $S$ be the random variable representing the service time of any packet in the system.

\begin{theorem}\label{T_source-aware}
The MGFs of the AoI and peak AoI of source $c$ under the  source-aware preemptive packet management policy, denoted by $\bar{M}_{\delta_c}(s)$ and $\bar{M}_{A_c}(s)$, respectively, are given as
\begin{align}
{\bar{M}_{\delta_c}(s)}=\dfrac{M_S(s-\lambda_c)(\bar{M}_{Y_c}(s)-1)}{sL_{\lambda_c}\bar{M}'_{Y_c}(0)},
\end{align}
\begin{align}
\bar{M}_{A_c}(s)=\dfrac{M_S(s-\lambda_c)\bar{M}_{Y_c}(s)}{L_{\lambda_c}},
\end{align}
where $L_{\lambda_c}=\mathbb{E}[e^{-\lambda_cS}]$, $M_S(s-\lambda_c)=\mathbb{E}[e^{(s-\lambda_c)S}]$ is the MGF of the service time $S$ at ${s-\lambda_c}$, $\bar{M}_{Y_c}(s)$ is the MGF of the interdeparture time $Y_c$ under the policy, which is given as 
\begin{align}\label{mgfinterdeparty0}
\bar{M}_{Y_c}(s)=\dfrac{a_cM_S(s-\lambda_c)}{(1-a'_c)\left(1-\sum_{c'\in\mathcal{C}\setminus\{c\}}\dfrac{a_{c'}M_S(s-\lambda_{c'})}{1-a'_{c'}}\right)},
\end{align}
where $a_c=\dfrac{\lambda_c}{\lambda-s}$ and $a'_c=\dfrac{\lambda_c(1-M_S(s-\lambda_c))}{\lambda_c-s}$, and $\bar{M}'_{Y_c}(0)$ is the first derivative of the MGF of $Y_c$ under the policy, evaluated at $s=0$, i.e.,
$$
\bar{M}'_{Y_c}(0)=\dfrac{\mathrm{d}(\bar{M}_{Y_c}(s))}{\mathrm{d}s}\Big|_{s=0}.
$$
\end{theorem}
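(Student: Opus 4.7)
My plan is to prove Theorem~\ref{T_source-aware} in three stages: (i) characterize the MGF of the delivered-packet system time $T_c$; (ii) derive the peak AoI and AoI MGFs in terms of $M_{T_c}(s)$ and $\bar{M}_{Y_c}(s)$; and (iii) derive $\bar{M}_{Y_c}(s)$ from the queue dynamics. For stage (i), under the source-aware preemptive policy a source-$c$ packet is delivered if and only if no fresh source-$c$ packet arrives during its service $S$. Since source-$c$ arrivals form an independent Poisson process of rate $\lambda_c$, conditional on $S=t$ the non-arrival probability is $e^{-\lambda_c t}$, so the delivered packet has system-time density $f_{T_c}(t)=f_S(t)e^{-\lambda_c t}/L_{\lambda_c}$, and integrating against $e^{st}$ gives $M_{T_c}(s)=M_S(s-\lambda_c)/L_{\lambda_c}$.

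For stage (ii), I would first show that $T_{c,i-1}$ and $Y_{c,i-1}$ are independent: at the delivery instant $t'_{c,i-1}$ the system is empty, and by the memoryless property of the Poisson arrival processes together with the i.i.d.\ service times, the post-$t'_{c,i-1}$ evolution (which determines $Y_{c,i-1}$) is independent of the pre-$t'_{c,i-1}$ history (which determines $T_{c,i-1}$). Combined with \eqref{AoI.eq}, this yields $\bar{M}_{A_c}(s)=M_{T_c}(s)\,\bar{M}_{Y_c}(s)$, matching the stated expression for the peak AoI. For the AoI MGF, stationarity and ergodicity let me write it as the per-cycle time average
\begin{equation*}
\bar{M}_{\delta_c}(s) = \frac{\mathbb{E}\!\left[\int_0^{Y_c} e^{s(T_c + u)}\,du\right]}{\mathbb{E}[Y_c]} = \frac{\mathbb{E}[e^{sT_c}(e^{sY_c}-1)]}{s\,\mathbb{E}[Y_c]},
\end{equation*}
since on a typical inter-delivery cycle of length $Y_{c,i-1}$ the AoI rises linearly from $T_{c,i-1}$ to $A_{c,i}$. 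Using the independence of $T_c$ and $Y_c$ and the identity $\mathbb{E}[Y_c]=\bar{M}'_{Y_c}(0)$ yields the claimed $\bar{M}_{\delta_c}(s)$.

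For stage (iii), I would decompose a source-$c$ interdeparture cycle into elementary phases using the memoryless property of the exponential inter-arrivals. Right after a source-$c$ delivery the system is empty, whereupon an idle period of duration $\mathrm{Exp}(\lambda)$ elapses, the first arrival coming from source $c'$ with probability $\lambda_{c'}/\lambda$; this triggers a source-$c'$ ``mini busy period'' consisting of a geometric chain of preempted attempts followed by one successful service, where each preempted attempt contributes MGF $a'_{c'}$ (computable from the joint law of $S$ and an independent $\mathrm{Exp}(\lambda_{c'})$) and the final completion contributes $M_S(s-\lambda_{c'})$. If $c'\neq c$ all source-$c$ arrivals during this busy period are blocked and the cycle restarts from empty; if $c'=c$ the cycle terminates. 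The resulting geometric structure gives
\begin{equation*}
\bar{M}_{Y_c}(s)=\sum_{k=0}^{\infty}\!\left(\sum_{c'\neq c}\frac{a_{c'}M_S(s-\lambda_{c'})}{1-a'_{c'}}\right)^{\!k}\!\frac{a_c M_S(s-\lambda_c)}{1-a'_c},
\end{equation*}
with $a_{c'}=\lambda_{c'}/(\lambda-s)$ capturing the idle-time MGF weighted by the probability that the first arrival is from $c'$; summing the geometric series recovers \eqref{mgfinterdeparty0}. The main obstacle will be stage (iii): it requires careful verification that successive idle, preemption, and service intervals are mutually independent under the source-aware rule -- in particular that blocking source-$c$ arrivals during a foreign busy period does not corrupt the memorylessness used when the next idle period begins -- and that the nested geometric series converges in a neighborhood of $s=0$ so that the resulting identities are genuine MGF identities and not merely formal power-series ones.
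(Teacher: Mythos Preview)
Your proposal is correct and follows the same overall architecture as the paper: compute $M_{T_c}(s)$, compute $\bar{M}_{Y_c}(s)$, and combine them via the product formula for the peak AoI and the quotient formula for the AoI. Stage~(i) matches the paper's Lemma~\ref{Lemma1} exactly.

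The differences are in execution, not in substance. For stage~(ii), the paper obtains the formula $M_{\delta_c}(s)=M_{T_c}(s)(M_{Y_c}(s)-1)/(s\,\mathbb{E}[Y_c])$ by invoking an external Laplace-transform identity for the AoI (Theorem~10 of \cite{8820073}) and then verifying its hypotheses for the present model; your renewal-reward/time-average computation over one inter-delivery cycle is a self-contained and more elementary route to the same identity. For stage~(iii), the paper formalises the cycle decomposition as a $(C{+}2)$-state semi-Markov chain and reads off $\bar{M}_{Y_c}(s)$ as the transfer function of an associated labelled directed graph, solving a linear system (Lemma~\ref{lembro}); your argument is the same decomposition stated probabilistically, summing the nested geometric series by hand. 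Your version is shorter here because the state space is simple enough that the two geometric layers (preemptions within a busy period, and non-$c$ busy periods within a cycle) are immediately visible; the paper's graph-transfer machinery is heavier but would scale more gracefully to policies with a less transparent combinatorial structure. Either way the algebra lands on the same closed form, and your identification $a_{c'}=p_{c'}\mathbb{E}[e^{s\eta_{c'}}]$, $a'_{c'}=p'_{c'}\mathbb{E}[e^{s\eta'_{c'}}]$, $M_S(s-\lambda_{c'})=\bar p_{c'}\mathbb{E}[e^{s\bar\eta_{c'}}]$ is exactly what the paper's Lemmas~\ref{prob01}--\ref{lemmfaap} and Remark~\ref{rem01} establish.
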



\begin{theorem}\label{T_preemptive}
The MGFs of the AoI and peak AoI of source $c$ under the  source-agnostic preemptive packet management policy, denoted by $\hat{M}_{\delta_{c}}(s)$ and $\hat{M}_{A_c}(s)$, respectively, are given as
\begin{align}
\hat{M}_{\delta_{c}}(s)=\dfrac{\lambda_cM_S(s-\lambda)}{\lambda_cM_S(s-\lambda)-s},
\end{align}
\begin{align}
\hat{M}_{A_c}(s)=\dfrac{\lambda_cM^2_S(s-\lambda)}{L_{\lambda}(\lambda_cM_S(s-\lambda)-s)}.	
\end{align}
\end{theorem}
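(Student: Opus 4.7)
The plan is to apply the general identity
\[
M_{\delta_c}(s)=\frac{M_{T_c}(s)\bigl(M_{Y_c}(s)-1\bigr)}{s\,\mathbb{E}[Y_c]},
\]
developed in the proof of Theorem~\ref{T_source-aware}, which holds whenever the system time $T_c$ of a delivered packet and the subsequent interdeparture time $Y_c$ are independent. That independence here follows from the independent-increments property of the Poisson arrival process, since $T_{c,i}$ is a functional of the service time of the $i$-th delivered packet together with arrivals strictly before $t'_{c,i}$, whereas $Y_{c,i}$ depends only on arrivals and service times after $t'_{c,i}$. The proof thus reduces to computing $M_{T_c}(s)$ and $\hat M_{Y_c}(s)$ under the source-agnostic preemptive policy and substituting.

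The system time is straightforward: under source-agnostic preemption, a packet is delivered if and only if no Poisson arrival of rate $\lambda$ occurs during its service time $S$, an event of conditional probability $e^{-\lambda S}$. Conditioning on delivery therefore gives the density $f_{T_c}(u)=f_S(u)e^{-\lambda u}/L_\lambda$, hence $M_{T_c}(s)=M_S(s-\lambda)/L_\lambda$, and by the throughput of successful source-$c$ deliveries $\mathbb{E}[Y_c]=1/(\lambda_c L_\lambda)$.

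The main work is computing $\hat M_{Y_c}(s)$, which I would do in two stages. First, let $D$ be the time between successive successful departures from any source. Starting from the empty server just after such a departure, a time $W\sim\mathrm{Exp}(\lambda)$ passes before the next arrival enters service; that packet either completes (no arrival during its service $S$, probability $e^{-\lambda S}$ given $S$) or is preempted, and in the latter case the memoryless and i.i.d.\ structure make the residual time to the next successful departure a fresh copy of the time-from-entry-to-successful-departure. Setting up this one-step MGF recursion by conditioning on $S$ and on the first inter-arrival, and solving, yields $M_D(s)=\lambda M_S(s-\lambda)/\bigl(\lambda M_S(s-\lambda)-s\bigr)$. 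Second, because each Poisson arrival is labelled source $c$ independently with probability $\lambda_c/\lambda$ and this label is independent of whether its service is successful, the number of successful departures between consecutive source-$c$ ones is $\mathrm{Geometric}(\lambda_c/\lambda)$. Hence $Y_c$ is a geometric sum of i.i.d.\ copies of $D$, giving
\[
\hat M_{Y_c}(s)=\frac{(\lambda_c/\lambda)M_D(s)}{1-(1-\lambda_c/\lambda)M_D(s)}=\frac{\lambda_c M_S(s-\lambda)}{\lambda_c M_S(s-\lambda)-s}.
\]

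Substituting $M_{T_c}$ and $\hat M_{Y_c}$ into the general AoI formula, the $L_\lambda$ factors cancel and the telescoping simplification $\hat M_{Y_c}(s)-1=s/\bigl(\lambda_c M_S(s-\lambda)-s\bigr)$ produces the stated $\hat M_{\delta_c}(s)$. For the peak AoI, $A_{c,i}=Y_{c,i-1}+T_{c,i-1}$ with the two terms independent by the same Poisson-increments argument, so $\hat M_{A_c}(s)=\hat M_{Y_c}(s)\,M_{T_c}(s)$, which upon multiplication gives the claimed expression. The main obstacle will be the derivation of $\hat M_{Y_c}$: cleanly writing the first-step recursion for $M_D$ and rigorously justifying the thinning that makes $Y_c$ a geometric sum of i.i.d.\ $D$'s.
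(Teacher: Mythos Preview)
Your proposal is correct and follows essentially the same route as the paper: apply Lemma~\ref{lemmsmgfage} and substitute the MGFs $\hat{M}_{T_c}(s)=M_S(s-\lambda)/L_\lambda$ and $\hat{M}_{Y_c}(s)=\lambda_c M_S(s-\lambda)/(\lambda_c M_S(s-\lambda)-s)$. The only difference is that the paper obtains these two MGFs by citing Lemmas~2 and~3 of \cite{8406928}, whereas you supply a self-contained derivation---the conditioning argument for $\hat{M}_{T_c}$ and the two-stage computation of $\hat{M}_{Y_c}$ via the recursion for $M_D(s)$ followed by geometric thinning---which is a clean and standard way to recover those cited results.
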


\begin{theorem}\label{T_blocking}
The MGFs of the AoI and peak AoI of source $c$ under the  non-preemptive packet management policy, denoted by $\tilde{M}_{\delta_{c}}(s)$ and $\tilde{M}_{A_c}(s)$, respectively, are given as
\begin{align}
\tilde{M}_{\delta_{c}}(s)=\dfrac{\lambda_cM_S(s)(s+\lambda(M_S(s)-1))}{s(\lambda/\mu+1)(\lambda-s-(\lambda-\lambda_c)M_S(s))},	
\end{align}
\begin{align}
\tilde{M}_{A_c}(s)=\dfrac{\lambda_cM^2_S(s)}{\lambda-s-(\lambda-\lambda_c)M_S(s)}.
\end{align}
\end{theorem}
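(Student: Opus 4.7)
The plan is to apply the standard sawtooth decomposition of the stationary AoI MGF, together with a renewal argument for the interdeparture time enabled by the memoryless property of the Poisson arrivals and the blocking nature of the policy. First I would observe that under the non-preemptive policy every delivered source $c$ packet enters an empty server (otherwise it would be blocked) and runs to completion without interruption. Hence its system time equals its service time, $T_c=^{\mathrm{st}}S$, and $\tilde{M}_{T_c}(s)=M_S(s)$.

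Next I would derive $\tilde{M}_{Y_c}(s)$ by a renewal argument. Starting just after a source $c$ departure at $t'_{c,i-1}$, the server is idle; by memorylessness the next interarrival $X_1$ is $\mathrm{Exp}(\lambda)$, and independently the arrival is of source $c$ with probability $p=\lambda_c/\lambda$. If so, its service time $S_1$ terminates the interdeparture cycle and $Y_c=X_1+S_1$; otherwise, the foreign packet is served (in time $S_1$, during which all arrivals are blocked and lost), the server becomes idle again, and the cycle restarts from a probabilistically identical state. This yields
\begin{equation*}
Y_c=\sum_{k=1}^{K}(X_k+S_k),\qquad K\sim\mathrm{Geom}(p),
\end{equation*}
with $X_k\sim\mathrm{Exp}(\lambda)$, $S_k\sim S$, all mutually independent. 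Summing the resulting geometric series with $M_X(s)=\lambda/(\lambda-s)$ gives
\begin{equation*}
\tilde{M}_{Y_c}(s)=\frac{\lambda_cM_S(s)}{\lambda-s-(\lambda-\lambda_c)M_S(s)},\qquad \mathbb{E}[Y_c]=\frac{\lambda/\mu+1}{\lambda_c}.
\end{equation*}

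Because $T_{c,i-1}$ is determined entirely by events before $t'_{c,i-1}$ while $Y_{c,i-1}$ depends only on the fresh interarrivals and services after $t'_{c,i-1}$, the two are independent. Using $A_c=Y_c+T_c$ then gives $\tilde{M}_{A_c}(s)=M_S(s)\,\tilde{M}_{Y_c}(s)$, which matches the claim. For the AoI MGF, I would invoke the standard sawtooth identity
\begin{equation*}
\tilde{M}_{\delta_c}(s)=\frac{\tilde{M}_{A_c}(s)-\tilde{M}_{T_c}(s)}{s\,\mathbb{E}[Y_c]},
\end{equation*}
obtained by integrating $e^{s\delta_c(t)}$ over one interdeparture cycle (during which $\delta_c(t)$ rises linearly from $T_{c,i-1}$ to $A_{c,i}$) and invoking ergodicity. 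Substituting and using the algebraic identity $\lambda_cM_S^2(s)-M_S(s)[\lambda-s-(\lambda-\lambda_c)M_S(s)]=M_S(s)[s+\lambda(M_S(s)-1)]$ produces the stated formula.

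The principal obstacle is the clean justification of the renewal decomposition of $Y_c$: one must argue that after every failed mini-cycle (a non-$c$ arrival occupies the server and blocks every intervening packet) the system returns to a probabilistically identical idle state, so that the number of mini-cycles is geometric and independent of the $(X_k,S_k)$. This follows from Poisson memorylessness combined with the blocking rule, which erases all residual information whenever the server becomes idle. The remainder is bookkeeping: a geometric-sum MGF and a one-line algebraic simplification.
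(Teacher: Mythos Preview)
Your proof is correct and follows essentially the same route as the paper: both invoke the decomposition of Lemma~\ref{lemmsmgfage} (your ``sawtooth identity'' $M_{\delta_c}(s)=(M_{A_c}(s)-M_{T_c}(s))/(s\,\mathbb{E}[Y_c])$ with $M_{A_c}=M_{T_c}M_{Y_c}$), observe that $\tilde{M}_{T_c}(s)=M_S(s)$ under blocking, and substitute the same expression for $\tilde{M}_{Y_c}(s)$. The only difference is that the paper imports $\tilde{M}_{Y_c}(s)$ from \cite{9500775}, whereas you supply the self-contained geometric-renewal derivation---which is precisely how that reference obtains it---so your argument is simply a slightly more explicit version of the paper's.
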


\begin{rem}\label{rem1MGFage}\normalfont
The $m$th moment of the AoI (peak AoI) is derived by calculating the $m$th derivative of the MGF of the AoI (peak AoI) when $s\rightarrow 0$. For instance, considering the source-aware preemptive packet management policy, the $m$th moment of the AoI and peak AoI  are given as
\begin{align}
\bar{\Delta}_c^m=\dfrac{\mathrm{d}^m(\bar{M}_{\delta_{c}}(s))}{\mathrm{d}s^m}\Big|_{s=0},~~~~
\bar{A}_c^m=\dfrac{\mathrm{d}^m(\bar{M}_{A_c}(s))}{\mathrm{d}s^m}\Big|_{s=0}.
\end{align}
\end{rem}

In the next three corollaries, by using  Theorems \ref{T_source-aware}, \ref{T_preemptive}, and  \ref{T_blocking} and Remark \ref{rem1MGFage}, we derive the average AoI and average peak AoI of source 1 in a \textit{two-source} M/G/1/1 queueing model under each of the three  packet management polices. 
\begin{corollary}\label{agemg11theorem}
The  average AoI and average peak AoI of source 1 in a two-source M/G/1/1 queueing model under the source-aware preemptive packet management  policy are given as
\begin{align}\nonumber
\bar{\Delta}_1&= \dfrac{L_{\lambda_1}^2(\lambda(1-L_{\lambda_2})-\lambda_1\lambda_2L_{\lambda_2}')+L_{\lambda_2}^2(\lambda_2(1-L_{\lambda_1}))-\Psi}{\lambda_1\lambda_2L_{\lambda_1}L_{\lambda_2}(L_{\lambda_1}+L_{\lambda_2}-L_{\lambda_1}L_{\lambda_2})},
\end{align}
\begin{align}\nonumber
\bar{A}_1&= \dfrac{L_{\lambda_1}+L_{\lambda_2}-L_{\lambda_1}L_{\lambda_2}+\lambda_1L_{\lambda_2}L_{\lambda_1}'}{\lambda_1L_{\lambda_1}L_{\lambda_2}},
\end{align}
where  
$\Psi=\lambda_1\lambda_2L_{\lambda_1}L'_{\lambda_1}+L_{\lambda_1}L_{\lambda_2}\lambda_2(1+\lambda_1L_{\lambda_1}')$ and $ L'_{\lambda_c}=\mathbb{E}[Se^{-\lambda_cS}] $. 	
\end{corollary}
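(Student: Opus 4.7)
The plan is to specialize Theorem~\ref{T_source-aware} to the case $C=2$ with $c=1$, $\mathcal{C}\setminus\{1\}=\{2\}$, and then apply Remark~\ref{rem1MGFage} at $m=1$ to both MGFs. Concretely, I would start by writing the MGF of the interdeparture time from \eqref{mgfinterdeparty0} as
\begin{equation*}
\bar{M}_{Y_1}(s)=\dfrac{a_1 M_S(s-\lambda_1)(1-a'_2)}{(1-a'_1)\bigl((1-a'_2)-a_2 M_S(s-\lambda_2)\bigr)},
\end{equation*}
with $a_c=\lambda_c/(\lambda-s)$ and $a'_c=\lambda_c(1-M_S(s-\lambda_c))/(\lambda_c-s)$. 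Using $M_S(-\lambda_c)=L_{\lambda_c}$ and $\tfrac{d}{ds}M_S(s-\lambda_c)\big|_{s=0}=L'_{\lambda_c}$, a direct substitution at $s=0$ gives $a_c=\lambda_c/\lambda$, $a'_c=1-L_{\lambda_c}$, and hence $\bar{M}_{Y_1}(0)=1$, as required.

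For the peak AoI, the formula is easy: since $\bar{M}_{A_1}(s)=M_S(s-\lambda_1)\bar{M}_{Y_1}(s)/L_{\lambda_1}$, differentiating once at $s=0$ gives $\bar{A}_1=L'_{\lambda_1}/L_{\lambda_1}+\bar{M}'_{Y_1}(0)$. Thus the whole task reduces to evaluating $\bar{M}'_{Y_1}(0)$. I would carry out this derivative using logarithmic differentiation, writing $\log\bar{M}_{Y_1}(s)$ as a sum of four terms (numerator-MGF, $(1-a'_2)$, $(1-a'_1)$, and the bracketed denominator), differentiating each, and collecting them. This yields a closed form in $L_{\lambda_1}$, $L_{\lambda_2}$, $L'_{\lambda_1}$, and $\lambda_1,\lambda_2$; algebraic rearrangement will put it in the claimed form for $\bar{A}_1$.

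For the AoI, the expression $\bar{M}_{\delta_1}(s)=M_S(s-\lambda_1)(\bar{M}_{Y_1}(s)-1)/\bigl(s L_{\lambda_1}\bar{M}'_{Y_1}(0)\bigr)$ is of type $0/0$ at $s=0$, so a naive first derivative will not work; this is the main obstacle. The clean way is to Taylor-expand numerator and denominator around $s=0$. With $f(s)=M_S(s-\lambda_1)(\bar{M}_{Y_1}(s)-1)$ and $g(s)=sL_{\lambda_1}\bar{M}'_{Y_1}(0)$, one checks $f(0)=g(0)=0$ and $f'(0)=g'(0)=L_{\lambda_1}\bar{M}'_{Y_1}(0)$ (recovering $\bar{M}_{\delta_1}(0)=1$), and then
\begin{equation*}
\bar{\Delta}_1=\bar{M}'_{\delta_1}(0)=\dfrac{f''(0)}{2g'(0)}=\dfrac{L'_{\lambda_1}}{L_{\lambda_1}}+\dfrac{\bar{M}''_{Y_1}(0)}{2\bar{M}'_{Y_1}(0)},
\end{equation*}
so I need both $\bar{M}'_{Y_1}(0)$ and $\bar{M}''_{Y_1}(0)$ in the two-source case.

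The bulk of the work is therefore the second-order Taylor expansion of $\bar{M}_{Y_1}(s)$. I would expand each factor $a_c$, $a'_c$, $M_S(s-\lambda_c)$ to order $s^2$, multiply and divide, and read off the $s$ and $s^2$ coefficients; note that the derivatives of $a'_c$ at $0$ already involve $L_{\lambda_c}$, $L'_{\lambda_c}$, and $\mathbb{E}[S^2 e^{-\lambda_c S}]$, but the second-moment term will cancel upon combining with the factor $\bar{M}''_{Y_1}(0)/(2\bar{M}'_{Y_1}(0))$ and the $L'_{\lambda_1}/L_{\lambda_1}$ contribution. After substitution into the expression for $\bar{\Delta}_1$ and common-denominator manipulation, the expected compact form with numerator grouped as $L_{\lambda_1}^2(\cdots)+L_{\lambda_2}^2(\cdots)-\Psi$ over $\lambda_1\lambda_2 L_{\lambda_1}L_{\lambda_2}(L_{\lambda_1}+L_{\lambda_2}-L_{\lambda_1}L_{\lambda_2})$ emerges. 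The hardest part will be keeping the algebra under control and verifying that all higher-moment terms of $S$ cancel so that only $L_{\lambda_c}$ and $L'_{\lambda_c}$ appear; this is the natural ``sanity check'' that the reduction to the stated form is correct.
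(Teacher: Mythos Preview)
Your proposal is correct and is precisely the approach the paper indicates: the corollary is stated as a direct consequence of Theorem~\ref{T_source-aware} together with Remark~\ref{rem1MGFage}, and the paper gives no further detail, so your plan of specializing \eqref{mgfinterdeparty0} to $C=2$ and differentiating at $s=0$ is exactly what is intended. Your identification of the $0/0$ structure in $\bar{M}_{\delta_1}(s)$ and the resulting formula $\bar{\Delta}_1=L'_{\lambda_1}/L_{\lambda_1}+\bar{M}''_{Y_1}(0)/(2\bar{M}'_{Y_1}(0))$ are both right.

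One practical remark that will spare you the heaviest part of the algebra (and makes the ``second-moment cancellation'' transparent rather than a sanity check): before differentiating, simplify $\bar{M}_{Y_1}(s)-1$ explicitly. Writing $M_j:=M_S(s-\lambda_j)$ and clearing denominators in the two-source specialization yields
\[
\bar{M}_{Y_1}(s)-1=\frac{s\bigl[\lambda_1\lambda_2(M_1+M_2-M_1M_2)-s(\lambda-s)\bigr]}{(\lambda_1 M_1-s)\bigl(\lambda_1\lambda_2 M_2-s(\lambda-s)\bigr)},
\]
so the factor $s$ cancels against the one in the denominator of $\bar{M}_{\delta_1}(s)$. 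The resulting expression for $\bar{M}_{\delta_1}(s)$ is then a genuine ratio with nonzero value at $s=0$, and a single differentiation (involving only $L_{\lambda_c}$ and $L'_{\lambda_c}$) suffices; no $\mathbb{E}[S^2e^{-\lambda_cS}]$ terms ever appear. This also immediately gives $\bar{M}'_{Y_1}(0)=(L_{\lambda_1}+L_{\lambda_2}-L_{\lambda_1}L_{\lambda_2})/(\lambda_1 L_{\lambda_1}L_{\lambda_2})$, which plugged into your peak-AoI identity reproduces the stated $\bar{A}_1$ at once.
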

\begin{rem}
The average AoI under the source-aware preemptive policy, presented in Corollary~\ref{agemg11theorem}, generalizes the existing results in \cite{8006504} and
\cite{9048914}. Specifically, when confining to a single-source case by letting $ \lambda_2 \rightarrow 0 $, the average
AoI becomes equal to that of the single-source M/G/1/1 queueing model with preemption derived in \cite{8006504}. Moreover, when we consider an exponentially distributed service time, the average
AoI expression coincides with that of the multi-source M/M/1/1 queueing model with preemption derived in \cite{9048914}.
\end{rem}

\begin{corollary}\label{agemg11theorempree}
The  average AoI and average peak AoI of source 1 in a two-source M/G/1/1 queueing model under the source-agnostic preemptive  packet management  policy are given as
\begin{align}\nonumber
\hat{\Delta}_1= \dfrac{1}{\lambda_1 L_{\lambda}},
~~~~~~~\hat{A}_{1}(s)=\dfrac{1+\lambda_1L'_{\lambda}}{\lambda_1 L_{\lambda}}.
\end{align}

\end{corollary}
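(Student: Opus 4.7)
The plan is to apply Remark \ref{rem1MGFage} directly to the MGFs provided in Theorem \ref{T_preemptive}. Specifically, setting $c=1$ and noting that in the two-source case $\lambda=\lambda_1+\lambda_2$, the quantities $\hat{\Delta}_1$ and $\hat{A}_1$ are obtained as the first derivatives at $s=0$ of $\hat{M}_{\delta_1}(s)$ and $\hat{M}_{A_1}(s)$, respectively. Throughout, I will use $M_S(-\lambda)=L_\lambda$ and $M_S'(-\lambda)=\mathbb{E}[Se^{-\lambda S}]=L_\lambda'$, consistent with the notation introduced for Corollary \ref{agemg11theorem}.

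For the average AoI, I would write $\hat{M}_{\delta_1}(s)=N(s)/D(s)$ with $N(s)=\lambda_1 M_S(s-\lambda)$ and $D(s)=\lambda_1 M_S(s-\lambda)-s$, and apply the quotient rule. Since $N(0)=D(0)=\lambda_1 L_\lambda$ (which also confirms $\hat{M}_{\delta_1}(0)=1$ as required of any MGF) and $N'(0)=\lambda_1 L_\lambda'$, $D'(0)=\lambda_1 L_\lambda'-1$, the numerator of $\hat{M}_{\delta_1}'(0)$ telescopes to $\lambda_1 L_\lambda$, which divided by $D(0)^2=\lambda_1^2 L_\lambda^2$ yields $\hat{\Delta}_1=1/(\lambda_1 L_\lambda)$.

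For the peak AoI, rather than differentiate the given fraction from scratch, it is cleaner to exploit the factorization $\hat{M}_{A_1}(s)=\bigl(M_S(s-\lambda)/L_\lambda\bigr)\,\hat{M}_{\delta_1}(s)$ and invoke the product rule. The first factor equals $1$ at $s=0$ with derivative $L_\lambda'/L_\lambda$, and the second factor has already been handled in the previous step, so the sum immediately collapses to $\hat{A}_1=L_\lambda'/L_\lambda+1/(\lambda_1 L_\lambda)=(1+\lambda_1 L_\lambda')/(\lambda_1 L_\lambda)$.

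Because each MGF admits a clean multiplicative structure and no higher-order expansion of a $0/0$ indeterminate form is required, I do not anticipate any substantive analytic difficulty. The only real care needed lies in translating between derivatives of $M_S$ evaluated at the shifted argument $s-\lambda$ and the shorthand $L_\lambda, L_\lambda'$, and in keeping track of signs introduced by the chain rule when differentiating with respect to $s$ rather than $\lambda$; I therefore expect the main obstacle to be notational bookkeeping rather than any genuine algebraic hurdle.
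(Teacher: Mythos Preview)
Your proposal is correct and follows exactly the approach indicated by the paper, which simply states that Corollary~\ref{agemg11theorempree} follows from Theorem~\ref{T_preemptive} together with Remark~\ref{rem1MGFage}. Your factorization $\hat{M}_{A_1}(s)=\bigl(M_S(s-\lambda)/L_\lambda\bigr)\hat{M}_{\delta_1}(s)$ for the peak AoI is a tidy shortcut, but it is the same differentiation-at-zero computation the paper has in mind.
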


\begin{corollary}\label{agemg11theoremblock}
The  average AoI and average peak AoI of source 1 in a two-source M/G/1/1 queueing model under the non-preemptive packet management  policy are given as
\begin{align}\nonumber
\tilde{\Delta}_1= \dfrac{\lambda+\mu}{\lambda_1\mu}+\dfrac{\lambda\mu\mathbb{E}[S^2]}{2(\lambda+\mu)}.
~~~~~~~\tilde{A}_{1}(s)=\dfrac{2\lambda_1+\lambda_2+\mu}{\lambda_1\mu}.
\end{align}
\end{corollary}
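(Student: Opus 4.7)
The plan is to apply Remark~\ref{rem1MGFage} to the MGFs established in Theorem~\ref{T_blocking}, specializing to $c=1$ with $\lambda=\lambda_1+\lambda_2$. That is, I would compute $\tilde{\Delta}_1 = \tilde{M}'_{\delta_1}(0)$ and $\tilde{A}_1 = \tilde{M}'_{A_1}(0)$ directly from the closed forms given in the theorem. The only inputs needed from the service-time distribution are $M_S(0)=1$, $M'_S(0)=\mathbb{E}[S]=1/\mu$, and $M''_S(0)=\mathbb{E}[S^2]$.

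The peak AoI is the easy part. Writing $\tilde{M}_{A_1}(s) = N(s)/D(s)$ with $N(s)=\lambda_1 M_S^2(s)$ and $D(s)=\lambda-s-(\lambda-\lambda_1)M_S(s)$, I would note that at $s=0$ both $N(0)=\lambda_1$ and $D(0)=\lambda_1$ are nonzero, so the quotient rule applies directly: $\tilde{A}_1 = [N'(0)D(0)-N(0)D'(0)]/D(0)^2$. Plugging in $N'(0)=2\lambda_1/\mu$ and $D'(0)=-1-(\lambda-\lambda_1)/\mu$ and simplifying with $\lambda-\lambda_1=\lambda_2$ yields $(2\lambda_1+\lambda_2+\mu)/(\lambda_1\mu)$.

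The AoI expression is the main obstacle, because $\tilde{M}_{\delta_1}(s)$ is a $0/0$ indeterminate form at $s=0$: the explicit factor $s$ in the denominator vanishes, and in the numerator the factor $s+\lambda(M_S(s)-1)$ also vanishes at $s=0$ since $M_S(0)=1$. To resolve this, I would use the Taylor expansion $M_S(s)-1 = s\mathbb{E}[S] + \tfrac{s^2}{2}\mathbb{E}[S^2]+O(s^3)$ to cancel the common factor of $s$. Concretely, define $\hat N(s) = \lambda_1 M_S(s)\bigl(1+\lambda\,\tfrac{M_S(s)-1}{s}\bigr)$ and $\hat D(s) = (\lambda/\mu+1)(\lambda-s-(\lambda-\lambda_1)M_S(s))$, so that $\tilde{M}_{\delta_1}(s)=\hat N(s)/\hat D(s)$ with $\hat N(0)=\lambda_1(1+\lambda/\mu)$ and $\hat D(0)=\lambda_1(\lambda/\mu+1)$ both nonzero (which also verifies $\tilde{M}_{\delta_1}(0)=1$, as any MGF must).

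With that rewriting, I would apply the quotient rule $\tilde{\Delta}_1=[\hat N'(0)\hat D(0)-\hat N(0)\hat D'(0)]/\hat D(0)^2$. Computing $\hat D'(0) = (\lambda/\mu+1)(-1-(\lambda-\lambda_1)/\mu)$ is straightforward, while $\hat N'(0)$ requires differentiating the factor $\tfrac{M_S(s)-1}{s}$, whose value at $s=0$ is $\mathbb{E}[S]=1/\mu$ and whose derivative at $s=0$ is $\tfrac{1}{2}\mathbb{E}[S^2]$; this is precisely where $\mathbb{E}[S^2]$ enters the final expression. After collecting terms and using $\lambda-\lambda_1=\lambda_2$ together with the identity $\lambda_1/\mu+1+(\lambda-\lambda_1)/\mu = 1+\lambda/\mu$, the algebra cleanly separates into $(\lambda+\mu)/(\lambda_1\mu) + \lambda\mu\,\mathbb{E}[S^2]/[2(\lambda+\mu)]$, which is the claimed formula for $\tilde{\Delta}_1$. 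The main care needed throughout is tracking the common factor $\lambda/\mu+1=(\lambda+\mu)/\mu$ so that it cancels correctly against $\hat D(0)^2$.
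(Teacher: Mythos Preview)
Your proposal is correct and follows exactly the approach the paper indicates: the corollary is obtained by applying Remark~\ref{rem1MGFage} to the MGFs in Theorem~\ref{T_blocking}, and your handling of the $0/0$ form in $\tilde{M}_{\delta_1}(s)$ via the Taylor expansion of $M_S(s)-1$ is the natural way to carry this out. The paper does not spell out the differentiation details, so your write-up is in fact more explicit than what appears there.
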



It is worth noting that the results in Corollaries \ref{agemg11theorempree} and \ref{agemg11theoremblock} were previously derived in \cite{8406928} and \cite{9500775}, respectively (without deriving the MGFs). Thus, our derived MGF expressions generalize the results in \cite{8406928,9500775}, i.e., besides the first moment, they can readily be used to derive higher moments of the AoI and peak AoI.

\section{Derivation of the MGFs of the AoI and Peak AoI }\label{Calculation of the MGF of the AoI under the packet management policies}

In this section, we prove Theorems \ref{T_source-aware}, \ref{T_preemptive}, and  \ref{T_blocking}.
To prove  the theorems, we first provide Lemma~\ref{lemmsmgfage} which presents the MGF of the AoI of source $c$ in the considered multi-source M/G/1/1 queueing model as a function of the MGFs of the system time of  source $c$, $T_c$, and interdeparture time of  source $c$, $Y_c$. It is worth noting that the presented MGF expression is valid for the source-aware preemptive, source-agnostic preemptive, and non-preemptive  packet management policies.
\begin{lemm}\label{lemmsmgfage}
The MGFs of the AoI and peak AoI of source $c$ in a multi-source M/G/1/1 queueing model under the source-aware preemptive, source-agnostic preemptive, and non-preemptive packet management policies, denoted by ${M}_{\delta_{c}}(s)$ and ${M}_{A_c}(s)$, respectively, can be expressed as
\begin{align}\label{MGFofagegeneral}
M_{\delta_{c}}(s)=\dfrac{M_{T_c}(s)(M_{Y_c}(s)-1)}{s\mathbb{E}[Y_c]},
\end{align}
\begin{align}\label{MGFpeak}
M_{A_c}(s)=M_{T_c}(s)M_{Y_c}(s),
\end{align}
where $M_{T_c}(s)$ is the  MGF of the system  time of a delivered packet of source $c$ and $M_{Y_c}(s)$ is the  MGF of the interdeparture time of source $c$; these MGFs need to be determined specific to the packet management policy.
\end{lemm}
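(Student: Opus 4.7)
The plan is to treat the stationary AoI as an ergodic continuous-time process and compute its distribution via a renewal-reward argument over the successive interdeparture intervals of source $c$. On the cycle $[t'_{c,i-1}, t'_{c,i}]$ the most recent source-$c$ update at the sink is packet $c,i-1$, so the AoI evolves linearly,
\begin{equation*}
\delta_c(t) = T_{c,i-1} + (t - t'_{c,i-1}), \qquad t \in [t'_{c,i-1}, t'_{c,i}],
\end{equation*}
starting at $T_{c,i-1}$ and reaching $T_{c,i-1}+Y_{c,i-1}$ just before the next drop. By ergodicity and stationarity, for any bounded measurable $f$,
\begin{equation*}
\mathbb{E}[f(\delta_c)] = \frac{\mathbb{E}\!\left[\int_0^{Y_c} f(T_c + u)\,du\right]}{\mathbb{E}[Y_c]},
\end{equation*}
where $T_c$ and $Y_c$ are the stationary versions of the system time and interdeparture time of a delivered source-$c$ packet.

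Next, I would substitute $f(x)=e^{sx}$ and interchange expectation and integration (justified by Fubini on any neighbourhood of $s=0$ where $M_S$ is finite) to obtain
\begin{equation*}
M_{\delta_c}(s) = \frac{\mathbb{E}\!\left[e^{sT_c}\,\dfrac{e^{sY_c}-1}{s}\right]}{\mathbb{E}[Y_c]}.
\end{equation*}
The factorization claimed in the lemma then follows from the independence of $T_c$ and $Y_c$, yielding
\begin{equation*}
M_{\delta_c}(s) = \dfrac{M_{T_c}(s)\bigl(M_{Y_c}(s)-1\bigr)}{s\,\mathbb{E}[Y_c]}.
\end{equation*}
For the peak AoI I would use the decomposition $A_{c,i}=Y_{c,i-1}+T_{c,i-1}$ recorded in \eqref{AoI.eq}; the same independence between $T_c$ and $Y_c$ immediately gives $M_{A_c}(s)=M_{T_c}(s)M_{Y_c}(s)$.

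The main obstacle is justifying that $T_c$ and $Y_c$ are independent under all three policies. The crucial structural fact is that the M/G/1/1 queue has no waiting buffer, so a delivered packet always leaves an empty system upon departure; the evolution after $t'_{c,i-1}$ is therefore a fresh regenerative copy driven by the underlying Poisson arrival processes and i.i.d.\ service times. Consequently, $T_{c,i-1}$ is determined by the events up to the epoch $t'_{c,i-1}$, whereas $Y_{c,i-1}=t'_{c,i}-t'_{c,i-1}$ is determined by the independent Poisson arrivals and service durations occurring strictly after $t'_{c,i-1}$. The memoryless property of the Poisson arrivals and the i.i.d.\ service-time assumption then give the required independence in steady state, uniformly across the source-aware preemptive, source-agnostic preemptive, and non-preemptive policies. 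Combining this with the previous display completes the proof of both MGF formulas.
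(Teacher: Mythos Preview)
Your proof is correct but takes a genuinely different route from the paper. The paper does not compute the AoI distribution directly; instead it invokes a general formula from \cite{8820073} (their Theorem~10) expressing the Laplace transform of the AoI as $L_{\delta_c}(s)=\bar\lambda_c\bigl(L_{T_c}(s)-L_{A_c}(s)\bigr)/s$, spends most of the proof verifying the three hypotheses of that theorem (positive finite rate of informative packets, stability, ergodicity of the marked point process $\{(t'_{c,i},T_{c,i})\}$), then converts to MGFs, substitutes $M_{A_c}=M_{T_c}M_{Y_c}$, and uses renewal theory to identify $\bar\lambda_c=1/\mathbb{E}[Y_c]$. You instead run a direct renewal--reward computation on the sawtooth sample path, obtaining $M_{\delta_c}(s)=\mathbb{E}\bigl[e^{sT_c}(e^{sY_c}-1)/s\bigr]/\mathbb{E}[Y_c]$ in one step and then factoring. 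Both arguments ultimately hinge on the independence of $T_c$ and $Y_c$; the paper asserts this without justification (``a summation of two independent random variables''), whereas you supply the regenerative argument (empty system at source-$c$ departures plus Poisson memorylessness) that actually establishes it for all three policies. Your approach is more elementary and self-contained, avoiding the external citation and its condition-checking; the paper's approach has the modularity benefit of separating the structural verification from the algebra, and makes explicit the ergodicity hypothesis that your renewal--reward identity also implicitly requires.
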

\begin{proof} 
Let an \textit{informative packet} refer to a successfully delivered packet from source $c$; otherwise, the packet is termed \textit{non-informative}. 
{By invoking the result in \cite[Theorem~10]{8820073} and applying it in our considered multi-source M/G/1/1 queueing system, if the following three conditions are satisfied} 
\begin{enumerate}
\item The  arrival rate of the informative
packets is positive and finite;
\item The system is stable;
\item The marked point process $\{(t'_{c,i},T_{c,i})\}_{i=1,2,\ldots}$ is ergodic;
%
\end{enumerate}
%
%
%
%
then, the Laplace transform of the AoI of source $c$, $L_{\delta_c}(s)$, is given  as 
\begin{align} \label{Laplace of age}
L_{\delta_c}(s)=\bar{\lambda}_c\dfrac{L_{T_c}(s)-L_{A_c}(s)}{s},
\end{align}
where $\bar{\lambda}_c$ is the arrival rate of informative packets,
$ L_{T_c}(s) $ is the Laplace transform of the system time of any delivered packet from source $c$, and $L_{A_c}(s)$ is the Laplace transform of the peak AoI of source $c$. 
Next, we verify the conditions for the multi-source M/G/1/1 queueing model under the three packet management policies.

Condition 1: Since the packets of source $c$, both informative and non-informative, arrive according to the Poisson process with rate $\lambda_c$, the mean arrival rate of informative packets is finite. The assumption  that the arrival rate of informative packets is positive, i.e., $\bar\lambda_c\ne 0$, is a reasonable assumption for any well-behaving status update system, since otherwise the AoI would go to infinity. 

Condition 2: Since the capacity of the considered system is one packet, i.e., there are no waiting rooms in the system, the system is stable under the three packet management policies. 
Moreover, since an informative packet refers to a successfully delivered packet from source $c$ and the system is stable,  
the mean arrival rate of informative packets of source $c$, $\bar\lambda_c$, is equal to the mean departure rate of the packets which is calculated by $\lim_{\tau\to\infty}\dfrac{N_c(\tau)}{\tau}$, where $N_c(\tau)$ is the number of delivered packets until time $\tau$.

Condition 3: 
If we ignore the non-informative packets and just observe the informative packets, the system can be considered as an  FCFS queueing model serving (only) the informative packets. In addition,  since the system is stable under the three policies, according to {\cite[Sect.~X,~ Proposition~ 1.3]{asmussen2008applied}}, the system times of informative packets, $\{T_{c,i}\}_{i=1,2,\ldots}$, form a regenerative process with finite mean regeneration time.
Therefore, it can be verified that
$\{(t'_{c,i},T_{c,i})\}_{i=1,2,\cdots}$ is mixing {\cite[Page~49]{baccelli2003elements}}, and consequently, it is ergodic. 


The 
Laplace transform and the MGF of the AoI are interrelated as  
\begin{align}\label{Moment-Laplace}
M_{\delta_{c}}(s)=\mathbb{E}[e^{s\delta_c}]&=L_{\delta_c}(-s)\stackrel{(a)}{=}\bar{\lambda}_c\dfrac{L_{A_c}(-s)-L_{T_c}(-s)}{s},
\end{align}
where $(a)$ follows from \eqref{Laplace of age}.
Similarly, for the MGF of the peak AoI of source $c$, $M_{A_c}(s)$, we have $M_{A_c}(s)=L_{A_c}(-s)$; 
and for the MGF of the system time of a delivered packet of source $c$, we have $M_{T_c}(s)=L_{T_c}(-s)$. Accordingly, \eqref{Moment-Laplace} can be written as  
\begin{align}\label{Moment-Laplace2}
M_{\delta_{c}}(s)=\bar{\lambda}_c\dfrac{M_{A_c}(s)-M_{T_c}(s)}{s}.
\end{align}
As shown in \eqref{AoI.eq}, the peak AoI of source $c$ can be presented as a summation of two independent random variables, $Y_{c,i-1}$ and $T_{c,i-1}$. Using the basic features of an MGF,  the MGF of the peak AoI, $M_{A_c}(s)$, is given as the product of the MGFs of random variables $Y_{c,i-1}$ and $T_{c,i-1}$, i.e.,
\begin{align}\label{MGFpeak2}
M_{A_c}(s)=M_{T_c}(s)M_{Y_c}(s).
\end{align}

Since interdeparture times between consecutive  packets of source $c$ under each of the three policies are i.i.d., the number of delivered packets until time $\tau$, $ N_c(\tau) $, forms a renewal process. Thus, we have
\begin{align}\label{effective arrival}
\bar\lambda_c=\lim_{\tau\to\infty}\dfrac{N_c(\tau)}{\tau}=\dfrac{1}{\mathbb{E}[Y_c]}.
\end{align}

Substituting  \eqref{MGFpeak}   and \eqref{effective arrival} into \eqref{Moment-Laplace2} completes the proof of Lemma~\ref{lemmsmgfage}.
\end{proof}

According to Lemma \ref{lemmsmgfage}, the main challenge in calculating the MGFs  of the AoI (see \eqref{MGFofagegeneral}) and peak AoI  (see \eqref{MGFpeak}) under each packet management policy reduces to  deriving the  MGF of the system  time of source $c$, $M_{T_c}(s)$, and the MGF of the interdeparture time of source $c$, $M_{Y_c}(s)$. Note that when we have $M_{Y_c}(s)$, we can easily derive $\mathbb{E}[Y_c]$ (as will be shown in  Remark~\ref{rem1MGFage}).

Next, we will derive the MGFs of the AoI and peak AoI under the source-aware preemptive, source-agnostic preemptive, and non-preemptive packet management policies.

\subsection{MGFs of AoI and Peak AoI Under the Source-Aware Preemptive Packet Management Policy}
To derive the MGF of the system  time of source $c$, we first derive the probability density function (PDF) of the system time, $f_{T_c}(t)$, which is given by the following lemma. 
\begin{lemm}\label{Lemma1}
The PDF of the system time of source $c$, $f_{T_c}(t)$, is given by
\begin{align}\label{pdfsystemtime}
f_{T_c}(t)=\dfrac{f_S(t)e^{-\lambda_ct}}{L_{\lambda_c}}.
\end{align}
\end{lemm}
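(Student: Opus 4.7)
The plan is to derive $f_{T_c}(t)$ by a direct conditional density argument, exploiting two features of the source-aware preemptive policy: (i) a packet that is admitted to the system goes directly to the server (there is no waiting room in an M/G/1/1 system), so its system time equals the length of the service interval during which it is actually served; and (ii) the only mechanism that can prevent a source $c$ packet from completing service is the arrival of another source $c$ packet during that service, since packets of other sources arriving during this interval are blocked and cleared and therefore leave the in-service packet undisturbed.

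From these two observations, a source $c$ packet is successfully delivered, and hence contributes to the distribution of $T_c$, if and only if no other source $c$ arrival occurs during its service interval. For such a delivered packet, $T_c$ equals the (fresh, i.i.d.) service requirement $S$.

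Let $D$ denote the event ``this source $c$ packet is delivered.'' Conditional on the service requirement $S=t$, the event $D$ reduces to the event that the Poisson arrival process of source $c$ (of rate $\lambda_c$) has no points in an interval of length $t$; since this arrival process is independent of $S$, we obtain $P(D\mid S=t)=e^{-\lambda_c t}$, and consequently $P(D)=\mathbb{E}[e^{-\lambda_c S}]=L_{\lambda_c}$. Applying Bayes' rule to the conditional density of $S$ given $D$ yields
\begin{equation}
f_{T_c}(t) \;=\; f_{S\mid D}(t) \;=\; \dfrac{P(D\mid S=t)\, f_S(t)}{P(D)} \;=\; \dfrac{f_S(t)\, e^{-\lambda_c t}}{L_{\lambda_c}},
\end{equation}
which is precisely the claimed expression.

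The argument is essentially a one-line conditioning, so there is no serious obstacle. The only delicate point worth being explicit about is the independence structure behind $P(D\mid S=t)=e^{-\lambda_c t}$: it relies on the memoryless property of the source $c$ Poisson process together with the assumption that service times are i.i.d.\ and independent of the arrival processes, so that the ``no preemption'' event during the service of a given packet depends only on the number of source $c$ arrivals in the corresponding interval and is independent of everything that happened before the packet entered service. This independence is what allows the conditional density above to be interpreted as the stationary marginal distribution of $T_c$ for a generic delivered packet.
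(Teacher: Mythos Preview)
Your proof is correct and takes essentially the same approach as the paper. The paper phrases the delivery event as $\{S<X_c\}$ where $X_c$ is the next source $c$ interarrival time and then computes the conditional density of $S$ given $S<X_c$ via a limit and an explicit integral for $\Pr(S<X_c)=L_{\lambda_c}$; your Bayes-rule formulation with $P(D\mid S=t)=e^{-\lambda_c t}$ and $P(D)=\mathbb{E}[e^{-\lambda_c S}]$ is the same argument written slightly more compactly.
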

\begin{proof}
The system time of a delivered packet from source $c$ is equal to the service time of the packet. 
Let $X_c$ be a random variable representing the interarrival time between two consecutive packets of source $c$. Thus, the distribution of $T_c$ is given by  ${\mathrm{Pr}(T_c>t)=\mathrm{Pr}(S>t\mid S<X_c)}$. Hence,  $f_{T_c}(t)$ is calculated as 
\begin{align}\label{f_t1}
f_{T_c}(t)&=\lim_{\epsilon\rightarrow 0}\dfrac{\mathrm{Pr}(t<T_c<t+\epsilon)}{\epsilon}\\&\nonumber
\stackrel{}{=}\lim_{\epsilon\rightarrow 0}\dfrac{\mathrm{Pr}(t<S<t+\epsilon\mid S<X_c)}{\epsilon}\\&\nonumber
=\lim_{\epsilon\rightarrow 0}\dfrac{\mathrm{Pr}(t<S<t+\epsilon)\mathrm{Pr}(S<X_c\mid t<S<t+\epsilon)}{\epsilon\mathrm{Pr}(S<X_c)}\\&\nonumber
\stackrel{}{=}\dfrac{f_S(t)\mathrm{Pr}(X_c>t)}{\mathrm{Pr}(S<X_c)}\\&\nonumber
\stackrel{(a)}{=}\dfrac{f_S(t)e^{-\lambda_ct}}{L_{\lambda_c}},
\end{align}
where $(a)$ follows from the fact that i) the interarrival times of the source $ c $ packets follow the exponential distribution with parameter $\lambda_c$ and thus, ${\mathrm{Pr}(X_c>t)=1-F_{X_c}(t)=e^{-\lambda_ct}}$, where $F_{X_c}(t)$ is the cumulative distribution function (CDF) of the interarrival time $X_c$ and ii) ${\mathrm{Pr}(S<X_c)}$  is calculated as 
\begin{align}\label{barpc}
\mathrm{Pr}(S<X_c)&=\int_{0}^{\infty}\mathrm{Pr}(S<X_c\mid X_c=t)f_{X_c}(t)\mathrm{d}t\\&\nonumber
=\int_{0}^{\infty}F_s(t)\lambda_ce^{-\lambda_ct}\mathrm{d}t\stackrel{(b)}{=} L_{\lambda_c},
\end{align}
where $F_s(t)$ is the CDF of the service time $S$, and $(b)$ follows from the fact that according to the feature of the Laplace transform, for any function $f(y), y\ge 0$, we have \cite[Sect.~13.5] {rade2013mathematics}:
\begin{align}\label{nbhg01}
L_{\int_{0}^{y}f(b)\mathrm{d}b}(s)=\dfrac{L_{f(y)}(s)}{s},
\end{align}
where $L_{f(y)}(s)$ is the Laplace transform of $f(y)$.
\end{proof}

Using Lemma~\ref{Lemma1}, the MGF of the system time of source $c$, $\bar{M}_{T_c}(s)=\int_{0}^{\infty}e^{st}f_{T_c}(t)\mathrm{d}t$, is given as
\begin{align}\label{mgfsystemtime}
\bar{M}_{T_c}(s)&=\dfrac{1}{L_{\lambda_c}}\int_{0}^{\infty}e^{(s-\lambda_c)t}f_S(t)\mathrm{d}t\\&\nonumber=\dfrac{M_S(s-\lambda_c)}{L_{\lambda_c}}.
\end{align}

The next step is to derive the MGF of the interdeparture time $Y_c$, $\bar{M}_{Y_c}(s)$, which is given by the following proposition.
\begin{Pro}\label{Lemma2}
The MGF of the  interdeparture time of  source $c$, $\bar{M}_{Y_c}(s)$, is given by
\begin{align}\label{mgfinterde1}
\bar{M}_{Y_c}(s)=\dfrac{a_cM_S(s-\lambda_c)}{(1-a'_c)\left(1-\sum_{c'\in\mathcal{C}\setminus\{c\}}\dfrac{a_{c'}M_S(s-\lambda_{c'})}{1-a'_{c'}}\right)},
\end{align}
where   $a'_c$ and $a_c$ were defined below \eqref{mgfinterdeparty0}.
\end{Pro}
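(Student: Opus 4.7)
The plan is to exploit the regenerative structure of $Y_c$ by conditioning on the source index of the first arrival after a source $c$ departure, and to reduce the resulting recursion to the MGF of a simpler ``effective service time'' random variable that absorbs same-source preemptions. Immediately after a successful source $c$ departure the system is empty. Let $T_1$ denote the waiting time until the next arrival and let $J$ be its source index. Standard properties of superposed independent Poisson processes yield $T_1 \sim \mathrm{Exp}(\lambda)$ independent of $J$ with $\Pr(J=c') = \lambda_{c'}/\lambda$, so that
$$
\mathbb{E}\bigl[e^{sT_1}\mathbf{1}(J=c')\bigr] = \frac{\lambda_{c'}}{\lambda - s} = a_{c'}.
$$

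Next, for each source $c'$ I would introduce $D_{c'}$, defined as the time from when a source $c'$ packet enters an otherwise empty server until the next successful source $c'$ departure. Because the source-aware policy allows only same-source preemption, arrivals from sources other than $c'$ during a $c'$ service are blocked and leave no dynamical residue, while a new $c'$ arrival discards the in-service packet and restarts. Letting $X_{c'} \sim \mathrm{Exp}(\lambda_{c'})$ denote the time to the next $c'$ arrival and conditioning on whether $S < X_{c'}$ (service completes) or $X_{c'} \le S$ (preemption occurs first), I obtain the renewal-type equation
$$
M_{D_{c'}}(s) = \mathbb{E}\bigl[e^{sS}\mathbf{1}(S<X_{c'})\bigr] + \mathbb{E}\bigl[e^{sX_{c'}}\mathbf{1}(X_{c'}\le S)\bigr]\,M_{D_{c'}}(s).
$$
Direct evaluation of the two expectations, using the exponential density of $X_{c'}$ and Fubini, gives $\mathbb{E}[e^{sS}\mathbf{1}(S<X_{c'})] = M_S(s-\lambda_{c'})$ and $\mathbb{E}[e^{sX_{c'}}\mathbf{1}(X_{c'}\le S)] = a'_{c'}$, yielding $M_{D_{c'}}(s) = M_S(s-\lambda_{c'})/(1-a'_{c'})$.

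Finally, I combine these ingredients by cases on $J$. If $J=c$ then $Y_c = T_1 + D_c$ in distribution, and $T_1$ and $D_c$ are independent by the strong Markov property at the arrival epoch. If $J=c' \neq c$, the server is continuously occupied by (possibly preempting) source $c'$ packets for a duration whose law is $D_{c'}$, during which all source $c$ arrivals are blocked; once the server empties, the memoryless property forces a clean regeneration and $Y_c = T_1 + D_{c'} + Y_c'$ with $Y_c'$ an independent copy of $Y_c$. Taking MGFs, using the independence of $T_1$, $J$, $D_{c'}$, and $Y_c'$, and summing over $c'$ gives
$$
\bar{M}_{Y_c}(s) = a_c M_{D_c}(s) + \left(\sum_{c' \in \mathcal{C}\setminus\{c\}} a_{c'} M_{D_{c'}}(s)\right)\bar{M}_{Y_c}(s),
$$
and solving for $\bar{M}_{Y_c}(s)$ with $M_{D_{c'}}(s) = M_S(s-\lambda_{c'})/(1-a'_{c'})$ produces the claimed expression \eqref{mgfinterde1}. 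The main obstacle I anticipate is a careful justification that the off-source busy period is distributionally identical to $D_{c'}$ and that the post-busy-period state is a true regeneration point for $Y_c$; this hinges on noting that blocked arrivals leave no residual state and that Poisson memorylessness resets the inter-arrival clock at the moment the server next empties.
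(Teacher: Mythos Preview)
Your argument is correct and reaches the stated formula, but it follows a different route than the paper's proof. The paper builds an explicit semi-Markov chain with $C+2$ states $\{q_0,q_1,\dots,q_C,q'_0\}$ tracking which source currently occupies the server, records the sojourn-time distributions and transition probabilities for every edge, and then invokes a directed-graph transfer-function lemma to sum over all paths from $q_0$ back to a terminal copy $\bar q_0$; the answer emerges by solving the associated linear system \eqref{hv0}. Your proof collapses this machinery: you absorb each self-loop $q_{c'}\!\to q_{c'}$ into a single auxiliary variable $D_{c'}$ via a one-line renewal equation (so $M_{D_{c'}}(s)=M_S(s-\lambda_{c'})/(1-a'_{c'})$ is exactly the paper's $\bar p_{c'}\mathbb{E}[e^{s\bar\eta_{c'}}]/(1-p'_{c'}\mathbb{E}[e^{s\eta'_{c'}}])$), and then use a regenerative decomposition on the first arrival index $J$ to obtain the final scalar equation directly. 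What the paper's approach buys is a systematic, mechanical template that would extend to richer state spaces (e.g., buffers or priority classes) without new insight; what your approach buys is brevity and transparency, since it bypasses the path-counting and transfer-function machinery entirely for this particular model. Your closing caveat is well placed: the only point needing care is that after a non-$c$ busy period the system is empty and, by Poisson memorylessness, the residual inter-arrival times of all sources are fresh exponentials, so the process does regenerate; this is implicit in the paper's chain through the identical outgoing edges of $q_0$ and $q'_0$.
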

\begin{proof}
The MGF of the interdeparture time of source $c$ packets is defined as ${\bar{M}_{Y_c}(s)=\mathbb{E}[e^{sY_c}]}$. To derive $\bar{M}_{Y_c}(s)$, we need to first characterize $Y_c$. To this end, Fig. \ref{Semi-Chain_c} depicts a semi-Markov chain that represents the different system occupancy states (indicated by $q$'s) and their transition probabilities (indicated by $p$'s) in relation to $Y_c$, i.e., the dynamics of the system occupancy of the $C$ different sources' packets in relation to $Y_c$. Thus, the graph captures all the probabilistic queueuing-related events that constitute the interdeparture time $Y_c$, allowing us to derive $Y_c$.

For the graph in Fig.\ \ref{Semi-Chain_c}, the $C+2$ states $\{q_0,q_1,q_2,\ldots,q_{C},q'_0\}$ are explained as follows. When a source $c$ packet is successfully delivered to the sink, the system goes to idle state $q_0$, where it waits for a new arrival from any source. State $q_{c'},~c'\in\mathcal{C}$, indicates that a source $c'$ packet is under service. State $q'_0$ indicates that a packet of source  $c'\in\mathcal{C}_{-c}$ is successfully delivered to the sink and the system becomes empty, where $\mathcal{C}_{-c}=\mathcal{C}\setminus\{c\}$.
From the graph, the interdeparture time $Y_c$ is calculated by characterizing the required time to start from state $q_0$ and return to $q_0$. Let $\bar X_c=\min_{c'\in\mathcal{C}_{-c}} X_{c'}$; then, the transitions between the states are explained in the following:
\begin{enumerate}
\item $ q_0\rightarrow q_{c'},~\forall c'\in\mathcal{C}$: The system is in the idle state $q_0$ and a source $ c' $ packet arrives. This transition happens {if the interarrival time of source $c'$ packet, $X_{c'}$, is shorter than the minimum interarrival time among all the other sources, $\bar X_{c'}$.} Thus, the transition occurs with probability $p_{c'}=\mathrm{Pr}(X_{c'}<\bar X_{c'})$. The sojourn time of the system in state $q_0$ before this transition, denoted by $\eta_{c'}$, has the distribution  ${\mathrm{Pr}(\eta_{c'}>t)=\mathrm{Pr}(X_{c'}>t\mid X_{c'}<\bar X_{c'})}$.


\item $q_{c'}\rightarrow q_{c'},~\forall c'\in\mathcal{C}$: The system is in state $q_{c'}$, i.e., serving a source $c'$ packet, while a new source $c'$ packet arrives and enters the system due to the source-aware preemptive packet management policy. This transition happens with probability $p'_{c'}=\mathrm{Pr}(X_{c'}<S)$. The sojourn time of the system in state $q_{c'}$ before this transition, denoted by $\eta'_{c'}$, has the distribution  ${\mathrm{Pr}(\eta'_{c'}>t)=\mathrm{Pr}(X_{c'}>t\mid X_{c'}<S)}$.

\item $q_c\rightarrow q_0$: The system is in state $q_c$ and the source $c$ packet completes service and is delivered to the sink.  This transition happens with probability $\bar p_c=\mathrm{Pr}(S<X_c)$. The sojourn time of the system in state $q_c$ before this transition, denoted by $\bar{\eta}_{c}$, has the distribution  ${\mathrm{Pr}(\bar \eta_c>t)=\mathrm{Pr}(S>t\mid S<X_c)}$.


\item $q_{c'}\rightarrow q'_0,~\forall c'\in\mathcal{C}_{-c}$: The system is in state $q_{c'},~\forall c'\in\mathcal{C}_{-c}$, and the source $c'$ packet completes service and is delivered to the sink.  This transition happens with probability $\bar p_{c'}=\mathrm{Pr}(S<X_{c'})$. The sojourn time of the system in state $q'_{c'}$ before this transition has the distribution  ${\mathrm{Pr}(\bar \eta_{c'}>t)=\mathrm{Pr}(S>t\mid S<X_{c'})}$.

\item $ q'_0\rightarrow q_{c'},~\forall {c'}\in\mathcal{C}$: This transition is the same as transition  $ q_0\rightarrow q_{c'}$.


\end{enumerate}

\begin{figure}
\centering
\includegraphics[width=.5\linewidth,trim = 0mm 0mm 0mm 0mm,clip]{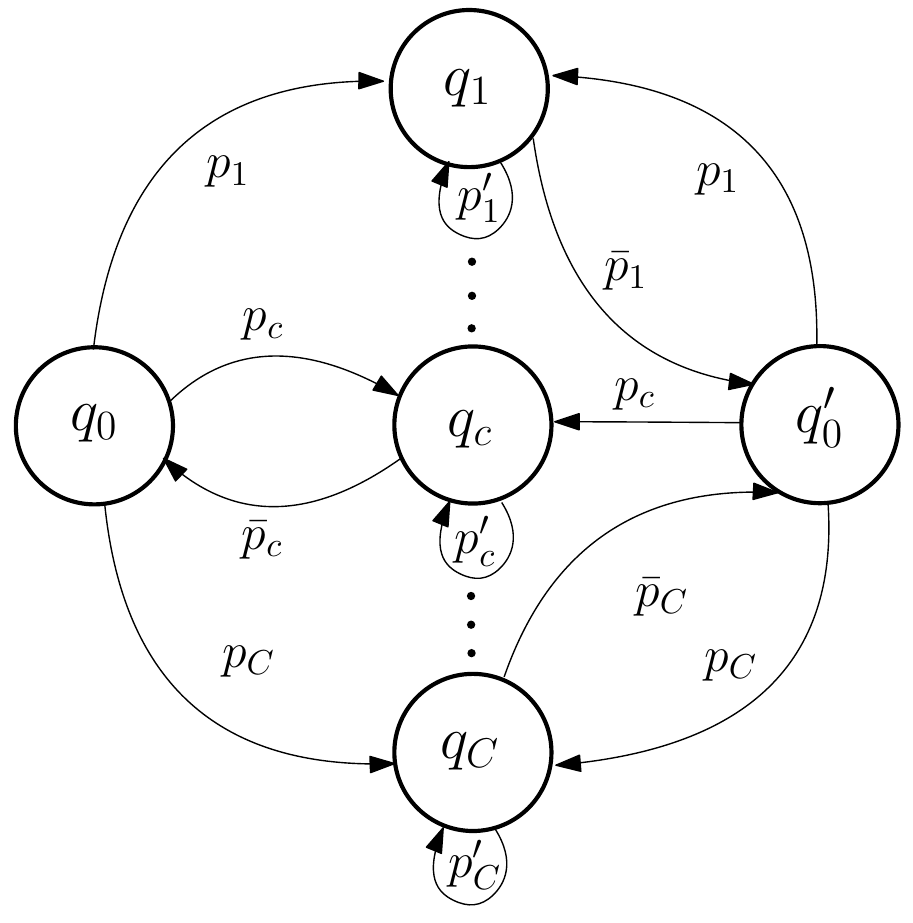}
\caption{The semi-Markov chain corresponding to the interdeparture time of two consecutive packets of source $ c $ under the source-aware preemptive policy, $Y_{c}$. }  
\label{Semi-Chain_c}
\vspace{-10mm}
\end{figure}

Next, we derive the transition probabilities and the sojourn time distributions. 
\begin{lemm}\label{prob01}
The transition probabilities $p_{c'}$, $p'_{c'}$, and  $\bar p_{c'}$ for all ${c'}\in\mathcal{C}$ are given as follows:
\begin{align}
&p_{c'}=\dfrac{\lambda_{c'}}{\lambda},~~~~
\bar p_{c'}=L_{\lambda_{c'}},~~~~
p'_{c'}=1-L_{\lambda_{c'}}.
\end{align}
\end{lemm}
\begin{proof}
Since $\bar X_{c'}$ is the minimum of independent exponentially distributed random variables ${X_{j},~{j}\in\mathcal{C}_{-{c'}}}$, it follows the exponential distribution with parameter $\bar\lambda_{c'}=\sum_{j\in\mathcal{C}_{-{c'}}}\lambda_j$. Thus, we have
\begin{align}\nonumber
p_{c'}&=\mathrm{Pr}(X_{c'}<\bar X_{c'})\\&\nonumber
=\int_{0}^{\infty}\mathrm{Pr}(X_{c'}<\bar X_{c'}\mid \bar X_{c'}=t)f_{\bar X_{c'}}(t)\mathrm{d}t\\&
=\int_{0}^{\infty}(1-e^{-\lambda_{c'}t})\bar\lambda_{c'}e^{-\bar\lambda_{c'}t}\mathrm{d}t=\dfrac{\lambda_{c'}}{\lambda}.
\end{align}

The probability $\bar p_{c'}=\mathrm{Pr}(S<X_{c'})=L_{\lambda_{c'}}$ was derived in \eqref{barpc}. In addition, we have ${p'_{c'}=\mathrm{Pr}(X_{c'}<S)=1-\mathrm{Pr}(X_{c'}>S)=1-\bar p_{c'}=1-L_{\lambda_{c'}}.}$
\end{proof}

\begin{lemm}\label{lemmfaap}
The PDFs of the sojourn time random variables $\eta_{c'}$, $\bar\eta_{c'}$, and $\eta'_{c'}$ for all ${c'}\in\mathcal{C}$ are given as follows:
\begin{align}
&f_{\eta_{c'}}(t)=\lambda e^{-\lambda t},\\&\nonumber 
f_{\bar \eta_{c'}}(t)=\dfrac{f_s(t)e^{-\lambda_{c'}t}}{L_{\lambda_{c'}}},\\&\nonumber
f_{\eta'_{c'}}(t)=\dfrac{\lambda_{c'}e^{-\lambda_{c'}t}(1-F_s(t))}{1-L_{\lambda_{c'}}}.
\end{align}
\end{lemm}

\begin{proof}
We only prove the PDF of the random variable  $\eta_{c'}$; the other PDFs can be derived using the same approach. The PDF of the random variable  $\eta_{c'}$ is given as
\begin{align}\label{f_A}
f_{\eta_{c'}}(t)&=\lim_{\epsilon\rightarrow 0}\dfrac{\mathrm{Pr}(t<\eta_{c'}<t+\epsilon)}{\epsilon}\\&\nonumber
\stackrel{}{=}\lim_{\epsilon\rightarrow 0}\dfrac{\mathrm{Pr}(t<X_{c'}<t+\epsilon\mid X_{c'}<\bar X_{c'})}{\epsilon}\\&\nonumber
=\lim_{\epsilon\rightarrow 0}\dfrac{\mathrm{Pr}(t<X_{c'}<t+\epsilon)\mathrm{Pr}(X_{c'}<\bar X_{c'}\mid t<X_{c'}<t+\epsilon)}{\epsilon\mathrm{Pr}(X_{c'}<\bar X_{c'})}\\&\nonumber
\stackrel{}{=}\dfrac{(1-F_{\bar X_{c'}}(t))f_{X_{c'}}(t)}{\mathrm{Pr}(X_{c'}<\bar X_{c'})}
\stackrel{}{=}\lambda e^{-\lambda t}.
\end{align}
\end{proof}

To reiterate, according to Fig.\ \ref{Semi-Chain_c}, the interdeparture time between two consecutive packets from source $c$ is equal to {the total sojourn time} experienced by the system between starting from $q_0$ and returning to $q_0$. That is, this total sojourn time consists of a summation of the individual sojourn times -- which are specific to each state and its related transitions -- for all possible paths $\{q_0,\ldots,q_0\}$. Thus, random variable $Y_c$ can be characterized by the sojourn time random variables $\eta_{c'}$, $\bar \eta_{c'}$, and $\eta'_{c'}$ for all ${c'}\in\mathcal{C}$, and their numbers of occurrences, which are denoted by $ k_{c'}$, $\bar k_{c'}$, and $k'_{c'}$, respectively. Consequently, $Y_c$ can be presented as 
\begin{align}\label{Y_c_sojourn}
Y_c=\sum_{c'\in\mathcal{C}}k_{c'}\eta_{c'}+\sum_{{c'}\in\mathcal{C}}\bar k_{c'}\bar\eta_{c'}+\sum_{{c'}\in\mathcal{C}}k'_{c'}\eta'_{c'}.
\end{align} 

{Having defined $Y_c$ in \eqref{Y_c_sojourn}, we proceed to derive the MGF $\bar{M}_{Y_c}(s)=\mathbb{E}[e^{sY_c}]$.} Let $ K_{c'},\bar K_{c'},$ and $K'_{c'}$ denote the random variables representing the numbers of occurrences of random variables $\eta_{c'}$, $\bar \eta_{c'}$, and $\eta'_{c'}$, respectively. Then, using \eqref{Y_c_sojourn}, the MGF of $Y_c$ is calculated as  
\begin{align}\label{mgfequ}
&\bar{M}_{Y_c}(s)=\mathbb{E}[e^{sY_c}]=\\\nonumber
&\mathbb{E}\Big[\mathbb{E}[e^{sY_c}\mid (K_1,\cdots,K_C,\bar K_1,\cdots,\bar K_C,K'_1,\cdots,K'_C)=(k_1,\cdots,k_C,\bar k_1,\cdots,\bar k_C,k'_1,\cdots,k'_C)]\Big]\\\nonumber
&=\sum_{k_1,\cdots,k_C,\bar k_1,\cdots,\bar k_C,k'_1,\cdots,k'_C}\!\!\!\!\!\!\mathbb{E}\big[e^{s(\sum_{{c'}\in\mathcal{C}}k_{c'}\eta_{c'}+\sum_{{c'}\in\mathcal{C}}\bar k_{c'}\bar\eta_{c'}+\sum_{{c'}\in\mathcal{C}}k'_{c'}\eta'_{c'})}\big]\\\nonumber
&\hspace{6mm}\mathrm{Pr}\bigg((K_1,\cdots,K_C,\bar K_1,\cdots,\bar K_C,K'_1,\cdots,K'_C)=(k_1,\cdots,k_C,\bar k_1,\cdots,\bar k_C,k'_1,\cdots,k'_C)\bigg)\\\nonumber
&\stackrel{(a)}{=}\sum_{k_1,\cdots,k_C,\bar k_1,\cdots,\bar k_C,k'_1,\cdots,k'_C}\prod_{c'=1}^{C}\mathbb{E}[e^{s\eta_{c'}}]^{k_{c'}}\prod_{{c'}=1}^{C}\mathbb{E}[e^{s\bar\eta_{c'}}]^{\bar k_{c'}}\prod_{{c'}=1}^{C}\mathbb{E}[e^{s\eta'_{c'}}]^{k'_{c'}}\\\nonumber
&\hspace{6mm}\prod_{{c'}=1}^{C}p_{c'}^{k_{c'}}\prod_{{c'}=1}^{C}\bar p_{c'}^{\bar k_{c'}}\prod_{{c'}=1}^{C}{p'_{c'}}^{k'_{c'}}
Q(k_1,\cdots,k_C,\bar k_1,\cdots,\bar k_C,k'_1,\cdots,k'_C),
\end{align}
where equality $ (a) $ follows because i) random variables $\eta_{c'}$, $\bar \eta_{c'}$, and $\eta'_{c'}$ for all ${c'\in\mathcal{C}}$ are independent, and ii) because of the independence of paths, $ \mathrm{Pr}\bigg(K_1,\cdots,K_C,\bar K_1,\cdots,\bar K_C,K'_1,\cdots,K'_C)=(k_1,\cdots,k_C,\bar k_1,\cdots,\bar k_C,k'_1,\cdots,k'_C\bigg) $ is equal to the summation of the probabilities of all the possible paths {corresponding to the occurrence combination} $ (k_1,\cdots,k_C,\bar k_1,\cdots,\bar k_C,k'_1,\cdots,k'_C) $, which is given by the term $\prod_{{c'}=1}^{C}p_{c'}^{k_{c'}}\prod_{{c'}=1}^{C}\bar p_{c'}^{\bar k_{c'}}\prod_{{c'}=1}^{C}{p'_{c'}}^{k'_{c'}}
Q(k_1,\cdots,k_C,\bar k_1,\cdots,\bar k_C,k'_1,\cdots,k'_C)$, where $Q(k_1,\cdots,k_C,\bar k_1,\cdots,\bar k_C,k'_1,\cdots,k'_C)$  is the number of paths  with the occurrence combination $ (k_1,\cdots,k_C,\bar k_1,\cdots,\bar k_C,k'_1,\cdots,k'_C) $. 

In the following remark, the values of $\mathbb{E}[e^{s\eta_{c'}}],~\mathbb{E}[e^{s\bar\eta_{c'}}]$, and $ \mathbb{E}[e^{s\eta_{c'}'}] $ for all ${c'}\in\mathcal{C}$ are given.
\begin{rem}\label{rem01}
By using the PDFs presented in  Lemma \ref{lemmfaap}, we have 
\begin{align}
&\mathbb{E}[e^{s\eta_{c'}}]=\dfrac{\lambda}{\lambda-s},\\&\nonumber
\mathbb{E}[e^{s\bar\eta_{c'}}]=\dfrac{M_S(s-\lambda_{c'})}{L_{\lambda_{c'}}},\\&\nonumber
\mathbb{E}[e^{s\eta'_{c'}}]=\dfrac{\lambda_{c'}(1-M_S(s-\lambda_{c'}))}{(\lambda_{c'}-s)(1-L_{\lambda_{c'}})}.
\end{align}
\end{rem}

What remains in deriving $ \bar{M}_{Y_c}(s) $ given by the right-hand side of equality $(a)$ of \eqref{mgfequ} are: i) the calculation of $Q(k_1,\cdots,k_C,\bar k_1,\cdots,\bar k_C,k'_1,\cdots,k'_C)$, i.e., the number of paths with the occurrence combination $ (k_1,\cdots,k_C,\bar k_1,\cdots,\bar k_C,k'_1,\cdots,k'_C)$, and ii) calculation of the summation over the different occurrence combinations. While a direct analytical solution seems difficult, we cope with this challenge through the following lemma, providing an effective tool for the remaining calculation.

\begin{lemm}\label{lembro}
Consider a directed graph $G=(\mathcal{V},\mathcal{E})$ consisting of  a set $\mathcal{V}$ of $V$ nodes, a set $\mathcal{E}$ of $E$ edges,  an algebraic
label $ e_{v'\rightarrow \bar v} $ on each edge $e\in\mathcal{E}$ from node $v'$ to $\bar v$, and a node $u\in\mathcal{V}$ with no incoming
edges. 
Let the transfer function $H(v)$ denote the weighted sum over all paths from $u$ to $v$ where the weight of each path is the product of its edge labels.
Then, the transfer functions $H(v),~\forall v\in\mathcal{V}$,  are calculated by solving the following system of linear equations:
\begin{align}\label{hv}
\begin{cases}
H(u)=1\\
H(v)=\sum_{v'\in \mathcal{E}}e_{v'\rightarrow v}H(v'),& u\ne v.
\end{cases}
\end{align}
\end{lemm}
\begin{proof}
See \cite[Sect.~6.4]{6Bixio2016}.
\end{proof}

We adopt Lemma~\ref{lembro} to calculate $ \bar{M}_{Y_c}(s) $ as follows. We form the directed graph $G=(\mathcal{V},\mathcal{E})$ by defining its set of nodes $\mathcal{V}$, the directed edges $\mathcal{E}$ of weights $e_{v'\rightarrow \bar v}$, and the transfer functions of each node, $H(v)$, $v\in\mathcal{V}$, so that the right-hand side of equality $(a)$ in \eqref{mgfequ} becomes equal to the transfer function of a node $\bar{v}\in\mathcal{V}$, $H(\bar{v})$. That is, we seek for the relation $\bar{M}_{Y_c}(s)=H(\bar{v})$. The formation of such graph $G$ can readily be understood by perceiving its high similarity to the structure of the semi-Markov chain -- a directed graph -- in Fig.~\ref{Semi-Chain_c}, which was used to characterize $Y_c$ through paths $\{q_0,\ldots,q_0\}$. In order to define the node $u\in\mathcal{V}$ with no incoming edges, we remove the incoming links of $q_0$, thus representing the node $u$, and as a countermeasure, we introduce a virtual node $\bar q_0$ to account for the system state after completing the service of a source $c$ packet. Finally, observing the factors that represent the edge weights on the right-hand side of equality $(a)$ in \eqref{mgfequ}, we depict the directed graph $G$ in Fig.~\ref{Detour_c}. {According to this graph, $ \bar{M}_{Y_c}(s) $ is given by the transfer function from node $q_0$ to node $\bar q_0$, $H(\bar q_0)$. In other words, we have $\bar{M}_{Y_c}(s)=H(\bar q_0)$, which now leads us to solve for $H(\bar q_0)$ based on \eqref{hv}.}
   

The system of linear equations in \eqref{hv} corresponding to the graph depicted in Fig.~\ref{Detour_c} is given as
\begin{align}\label{hv0}
&H( q_0)=1,\\&\nonumber
H( q_{c'})=p_{c'}\mathbb{E}[e^{s\eta_{c'}}]H(q_0)+p'_{c'}\mathbb{E}[e^{s\eta'_{c'}}]H(q_{c'})+ p_{c'}\mathbb{E}[e^{s\eta_{c'}}]H(q'_0),
\forall {c'}\in \mathcal{C},
\\&\nonumber
H( q'_0)=\sum_{{c'}\in\mathcal{C}_{-c}}\bar p_{c'}\mathbb{E}[e^{s\bar \eta_{c'}}]H(q_{c'}),\\&\nonumber
H(\bar q_0)=\bar p_c\mathbb{E}[e^{s\bar \eta_c}]H(q_c).  
\end{align}
By solving the system of linear equations in \eqref{hv0}, $H(\bar q_0)$ 
is given as 
\begin{align}\label{barhq0}
H(\bar q_0)=\dfrac{p_{c}\mathbb{E}[e^{s\eta_{c}}]{\bar p_{c}}\mathbb{E}[e^{s\bar\eta_{c}}]}{\big(1-p'_{c}\mathbb{E}[e^{s\eta'_{c}}]\big)\bigg(1-\sum_{c'\in\mathcal{C}_{-c}}\dfrac{p_{c'}\mathbb{E}[e^{s\eta_{c'}}]{\bar p_{c'}}\mathbb{E}[e^{s\bar\eta_{c'}}]}{1-p'_{c'}\mathbb{E}[e^{s\eta'_{c'}}]}\bigg)}.
\end{align}

Finally, substituting the probabilities $p_{c'}$, $p'_{c'}$, and $\bar p_{c'}$ given in  Lemma~\ref{prob01} and the values of  $\mathbb{E}[e^{s\eta_{c'}}]$, $\mathbb{E}[e^{s\eta'_{c'}}]$, and $\mathbb{E}[e^{s\bar{\eta}_{c'}}]$ given in Remark~\ref{rem01} into \eqref{barhq0} results in the MGF of the interdeparture time of source $c$, $\bar{M}_{Y_c}(s)$, as given
in Proposition~\ref{Lemma2}. 
\end{proof} 

Finally, substituting the MGF of the system time of source $c$ derived in  \eqref{mgfsystemtime} and the MGF of the interdeparture time of source $c$ derived in \eqref{mgfinterde1}  into \eqref{MGFofagegeneral}  results in the MGF of the AoI under the source-aware preemptive policy, $\bar{M}_{\delta_c}(s)$, given in Theorem~\ref{T_source-aware}. In addition, substituting   \eqref{mgfsystemtime} and \eqref{mgfinterde1} into \eqref{MGFpeak} results in the MGF of the peak AoI under the source-aware preemptive policy, $\bar{M}_{A_c}(s)$,  given in Theorem~\ref{T_source-aware}.
\begin{figure}
\centering
\includegraphics[width=.45\linewidth,trim = 0mm 0mm 0mm 0mm,clip]{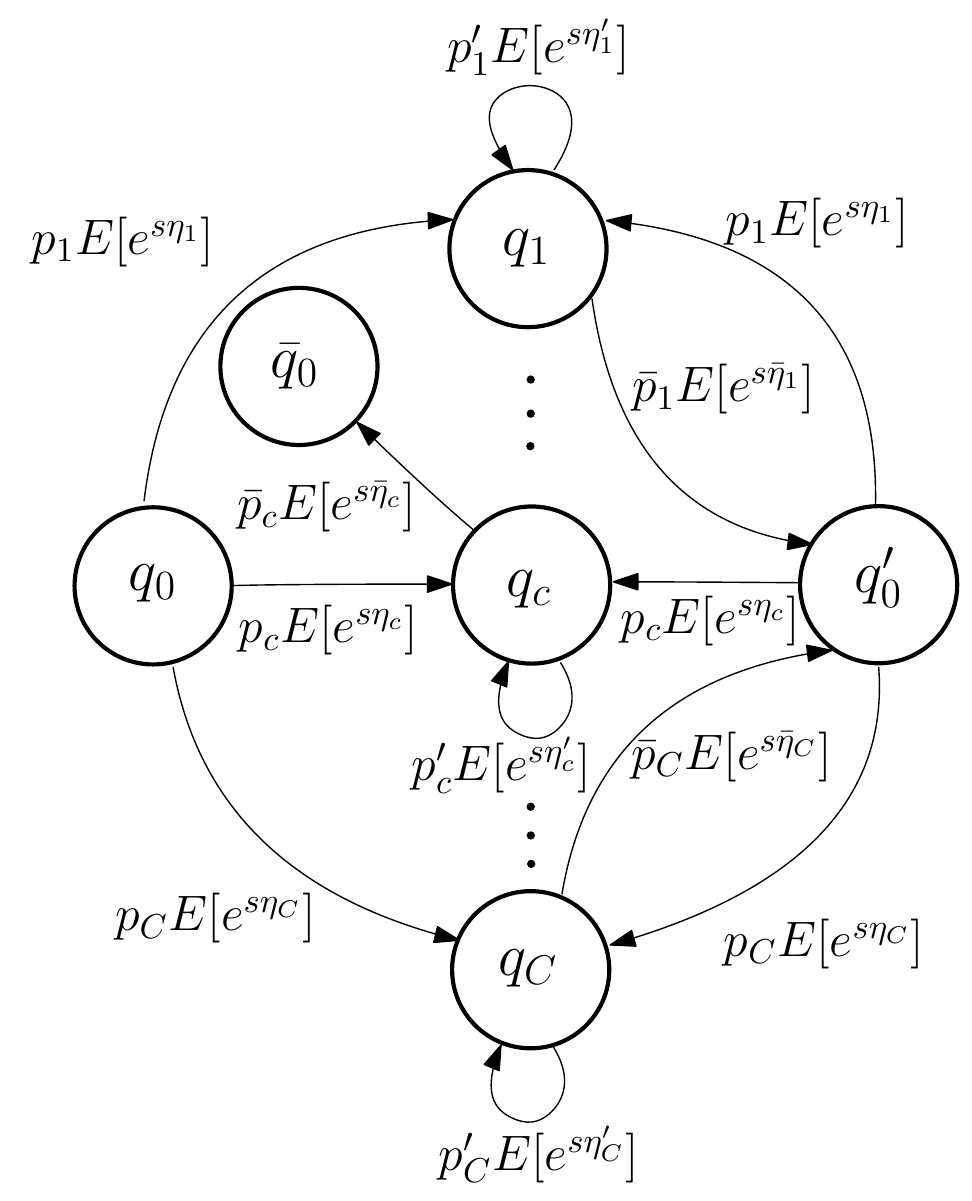}
\caption{The directed graph to calculate the MGF of the interdeparture time under the source-aware preemptive policy.}  
\label{Detour_c}
\vspace{-10mm}
\end{figure}



\subsection{MGFs of AoI and Peak AoI Under the Source-Agnostic Preemptive and Non-Preemptive Policies}

For the source-agnostic preemptive policy, Lemmas~2 and 3 in \cite{8406928} provide the MGFs of the system time of source $c$, $\hat{M}_{T_{c}}(s)$, and
the interdeparture time of source $c$, $\hat{M}_{Y_{c}}$, respectively, which are given as
\begin{align}\label{mgfoftandypreemptive}
&\hat{M}_{T_{c}}(s)= \dfrac{M_S(s-\lambda)}{L_{\lambda}},~~~~
\hat{M}_{Y_{c}}(s)=\dfrac{\lambda_cM_S(s-\lambda)}{\lambda_cM_S(s-\lambda)-s}.
\end{align}
Substituting $\hat{M}_{T_{c}}(s)$ and $\hat{M}_{Y_{c}}(s)$ in \eqref{mgfoftandypreemptive} into \eqref{MGFofagegeneral}  results in the MGF of the AoI under the source-agnostic preemptive policy, $\hat{M}_{\delta_c}(s)$, given in Theorem~\ref{T_preemptive}.  Substituting \eqref{mgfoftandypreemptive} into \eqref{MGFpeak} results in the MGF of the peak AoI of source $c$ under the source-agnostic preemptive policy, $\hat{M}_{A_c}(s)$, given in Theorem~\ref{T_preemptive}.

Under the non-preemptive policy, the system time of a delivered packet is equal to the service time of the packet. Thus, the MGF of the system time of source $c$ under the non-preemptive policy is given by $\tilde{M}_{T_{c}}(s)=M_S(s)$.  Equation (13) in \cite{9500775} provides the MGF of the interdeparture time  of source $c$ under the  non-preemptive policy, $\tilde{M}_{Y_{c}}(s)$, which is given as 
\begin{align}\label{mgfoftandynonpreemptive}
\tilde{M}_{Y_{c}}(s)=\dfrac{\lambda_cM_S(s)}{(\lambda-s)-(\lambda-\lambda_c)M_S(s)}.
\end{align}
Substituting $\tilde{M}_{T_{c}}(s)=M_S(s)$ and $\tilde{M}_{Y_{c}}(s)$ in \eqref{mgfoftandynonpreemptive} into \eqref{MGFofagegeneral}  results in the MGF of the AoI under the non-preemptive policy, $\tilde{M}_{\delta_c}(s)$, given in Theorem~\ref{T_blocking}. 
Substituting $\tilde{M}_{T_{c}}(s)=M_S(s)$ and $\tilde{M}_{Y_{c}}(s)$ in \eqref{mgfoftandynonpreemptive} into \eqref{MGFpeak} results in the MGF of the peak AoI of source $c$ under the non-preemptive policy, $\tilde{M}_{A_c}(s)$, given in Theorem~\ref{T_blocking}. 

%


\section{Numerical Results}\label{Numerical Results}
In this section, we use Corollaries~\ref{agemg11theorem},~\ref{agemg11theorempree},~and~\ref{agemg11theoremblock} to validate the derived results for the average AoI under the source-aware preemptive packet management policy in a two-source system and compare the performance of the three policies in terms of the average AoI and sum average AoI. In addition,  using the MGFs of the AoI derived in Theorems~\ref{T_source-aware},~\ref{T_preemptive},~and~\ref{T_blocking}, we investigate the standard deviation of the AoI to assess the variation of the AoI around the mean. 

We investigate two service time distributions: i) gamma distribution and ii) Pareto distribution. 
\begin{itemize}
    \item The PDF of  a random variable $S$ following a  gamma distribution is defined as
$
{f_S(t)=
\dfrac{\beta^{\kappa} t^{\kappa-1}\exp(-\beta t)}{\Gamma(\kappa)},~t>0,}$ for parameters $ \kappa>0$ and $\beta>0,
$ where $\Gamma(\kappa)$ is the gamma function at $\kappa$. The  service rate is $\mu=1/\mathbb{E}[S]={\beta}/{\kappa}$. 
\item The PDF of a random variable $S$ following a Pareto distribution is defined as
$
{f_S(t)=\dfrac{\alpha {\omega}^\alpha}{t^{\alpha+1}},\,\,\, \text{for}\,\,\,t\in[\omega,\infty]}$ and parameters $ \omega>0$ and $\alpha>1$. The  service rate is  ${\mu=\dfrac{\alpha-1}{\alpha \omega}}$.
\end{itemize}
In all the figures, we have $\lambda=\lambda_1 + \lambda_2= 1$. Next, we investigate the contours of achievable average AoI pairs, standard deviation of the AoI, and the sum average AoI under each policy.




\subsection{Contours of Achievable Average AoI Pairs}
$\text{Fig.\ \ref{Gamma_C}}$ illustrates  the  contours of achievable average AoI pairs $(\Delta_1,~\Delta_2)$ for the proposed source-aware preemptive packet management policy, the source-agnostic preemptive policy, and the non-preemptive policy under the gamma distribution with service rate $\mu=1$ for the  parameters $\kappa=\beta=0.5$, $\kappa=\beta=1.7$, and $\kappa=\beta=3$. Note that for a fixed service rate, increasing $\beta$ makes the gamma distribution to have a lighter tail.    
For the parameters  $\kappa=\beta=0.5$,  the source-agnostic preemptive policy outperforms the others and the non-preemptive is the worst policy (Fig.~\ref{Gamma_1}); for the parameters  $\kappa=\beta=1.7$,  the source-aware preemptive policy outperforms the others and the non-preemptive is the worst policy (Fig.~\ref{Gamma_4}); and  for the parameters  $\kappa=\beta=3$,   the non-preemptive policy outperforms the others and the source-agnostic preemptive policy is the worst one (Fig.~\ref{Gamma_10}). 

Fig.~\ref{Pareto_C} 
illustrates  the  contours of achievable average AoI pairs ${(\Delta_1,~\Delta_2)}$ for the packet management policies
under the Pareto distribution with $\mu=10$ for the sets of parameters ${(\alpha=2.4,~\omega=0.0583)},~{(\alpha=2.7,~\omega=0.630)}$, and ${(\alpha=4,~\omega=0.750)}$. 
Note that for a fixed service rate, increasing $\alpha$ makes the Pareto distribution to have a lighter tail.
%
%
%
 Similar to the observations made for the gamma distribution, for the parameters  ${(\alpha=2.4,~\omega=0.0583)}$, 
the source-agnostic preemptive policy outperforms the others and the non-preemptive policy is the worst one (Fig.~\ref{Pareto_onetenth_n}); for the parameters  ${(\alpha=2.7,~\omega=0.630)}$, 
the source-aware preemptive policy outperforms the others and the non-preemptive policy is the worst one (Fig.~\ref{Pareto_1_n}); and  for the parameters  ${(\alpha=4,~\omega=0.750)}$, 
the non-preemptive policy outperforms the others and the source-agnostic preemptive policy is the worst one (Fig.~\ref{Pareto_2_n}).


{Figs.\ \ref{Gamma_C}} and~\ref{Pareto_C} 
show  that for a fixed mean service time and the set of parameters that make the tail of the distribution heavy enough, the source-agnostic preemptive policy is the best one; and for the parameters that the tail of the distribution is light enough, the non-preemptive policy is the best one.
This is due to the fact that for a fixed mean service time, the heavier the tail, the higher the chance of serving a packet with service time that is substantially longer than the mean service time. In this case, {the preemption enables discarding the packets that would otherwise keep the server inefficiently busy for a long time period and, in turn, enables switching to serve a more fresh packet which has a high chance of experiencing shorter service time.} 
On the other hand, when the tail of the distribution is light enough, it is better to block new arrivals. This is because preemption would cause infrequent updating {due to excessively switching the packet under service so that any packet rarely completes service}.

 

In addition, we can see that the simulated curves for the source-aware preemptive packet management policy matches with the derived expression in Corollary~\ref{agemg11theorem}
(Fig.~\ref{Gamma_1}).

\begin{figure}
\centering
\subfigure[$\kappa=\beta=0.5$]{
\includegraphics[width=0.47\textwidth]{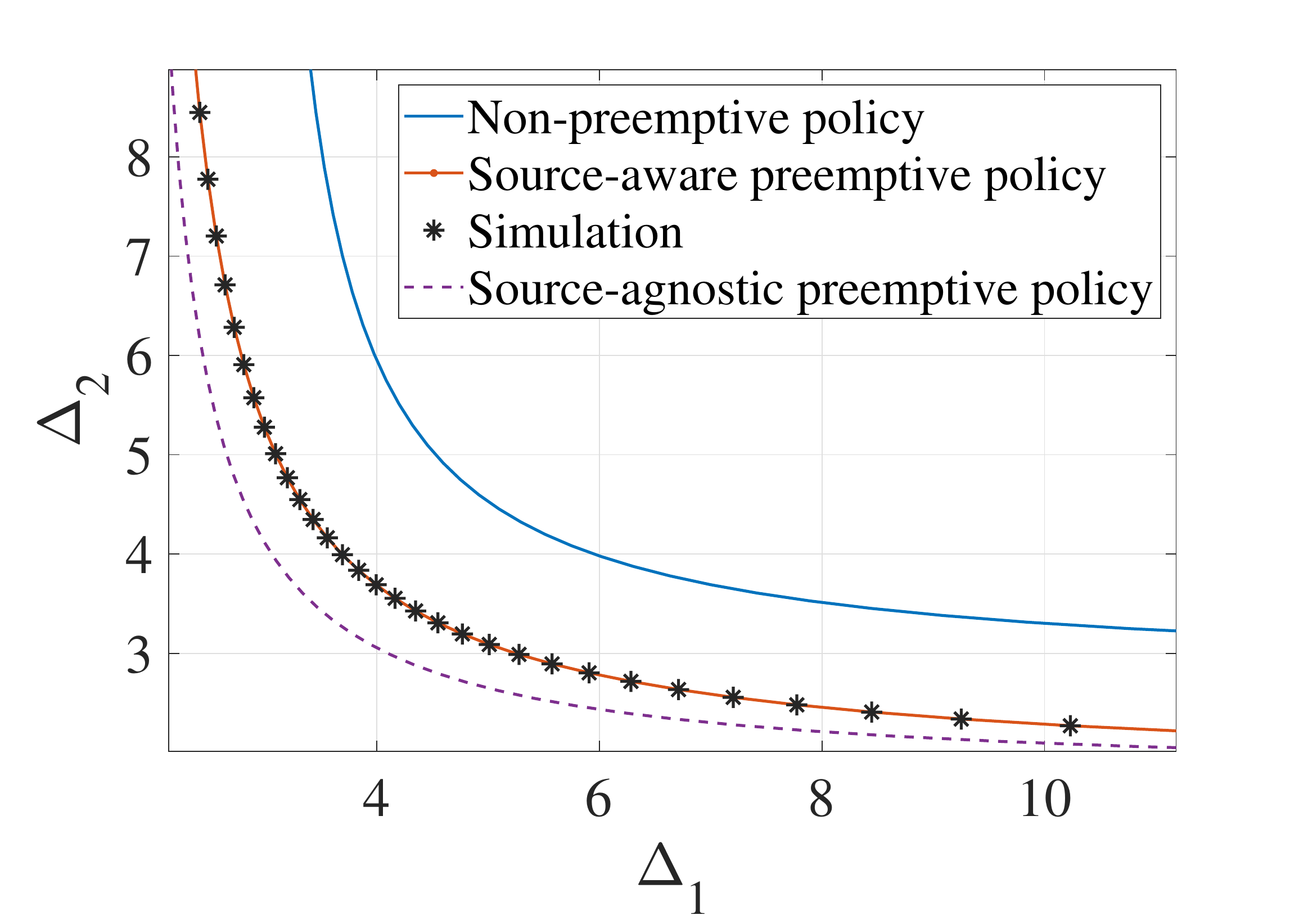}
\label{Gamma_1}
}
\subfigure[$\kappa=\beta=1.7$]
{
\includegraphics[width=0.47\textwidth]{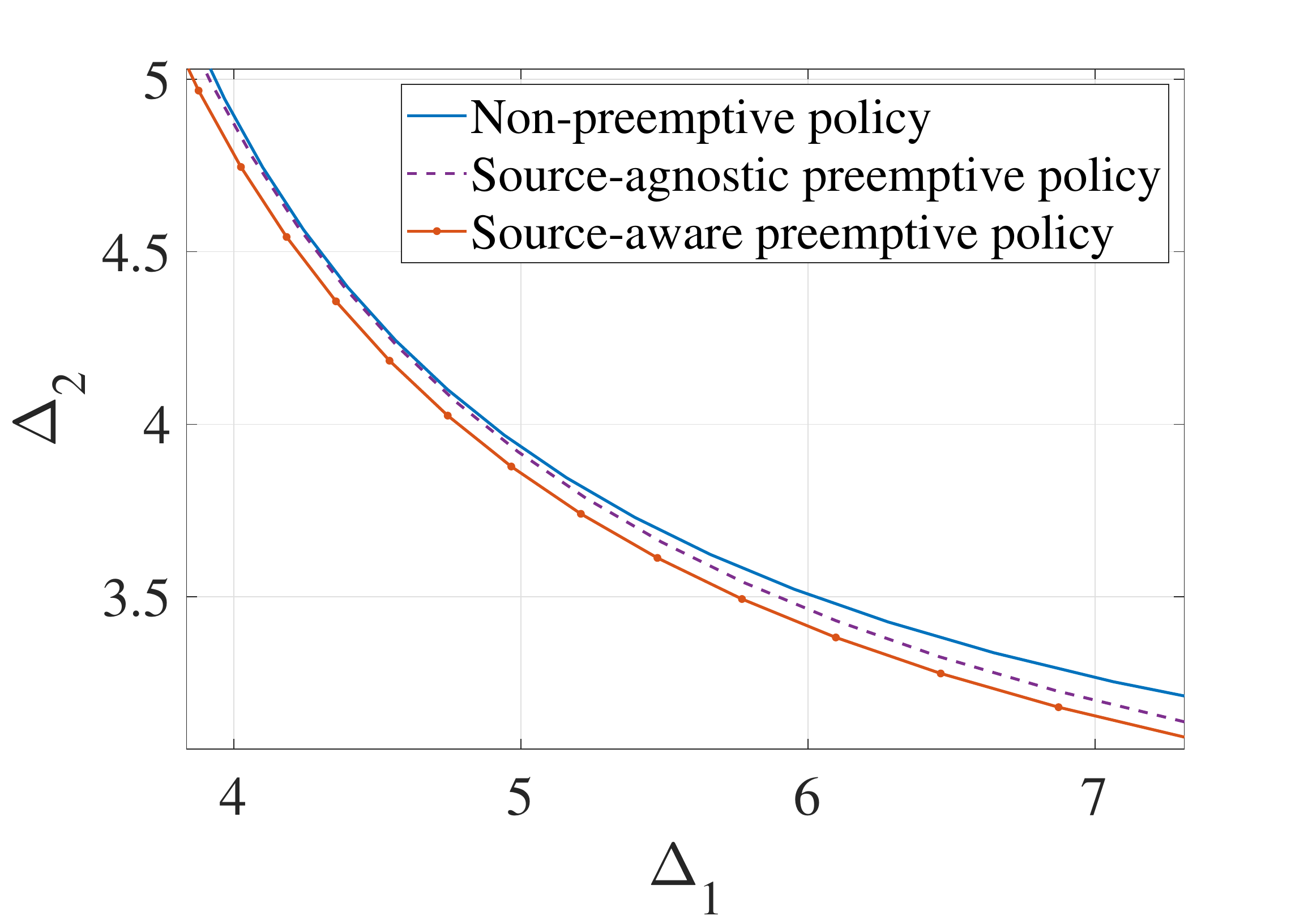}
\label{Gamma_4}
}
\subfigure[$\kappa=\beta=3$]{
\includegraphics[width=0.49\textwidth]{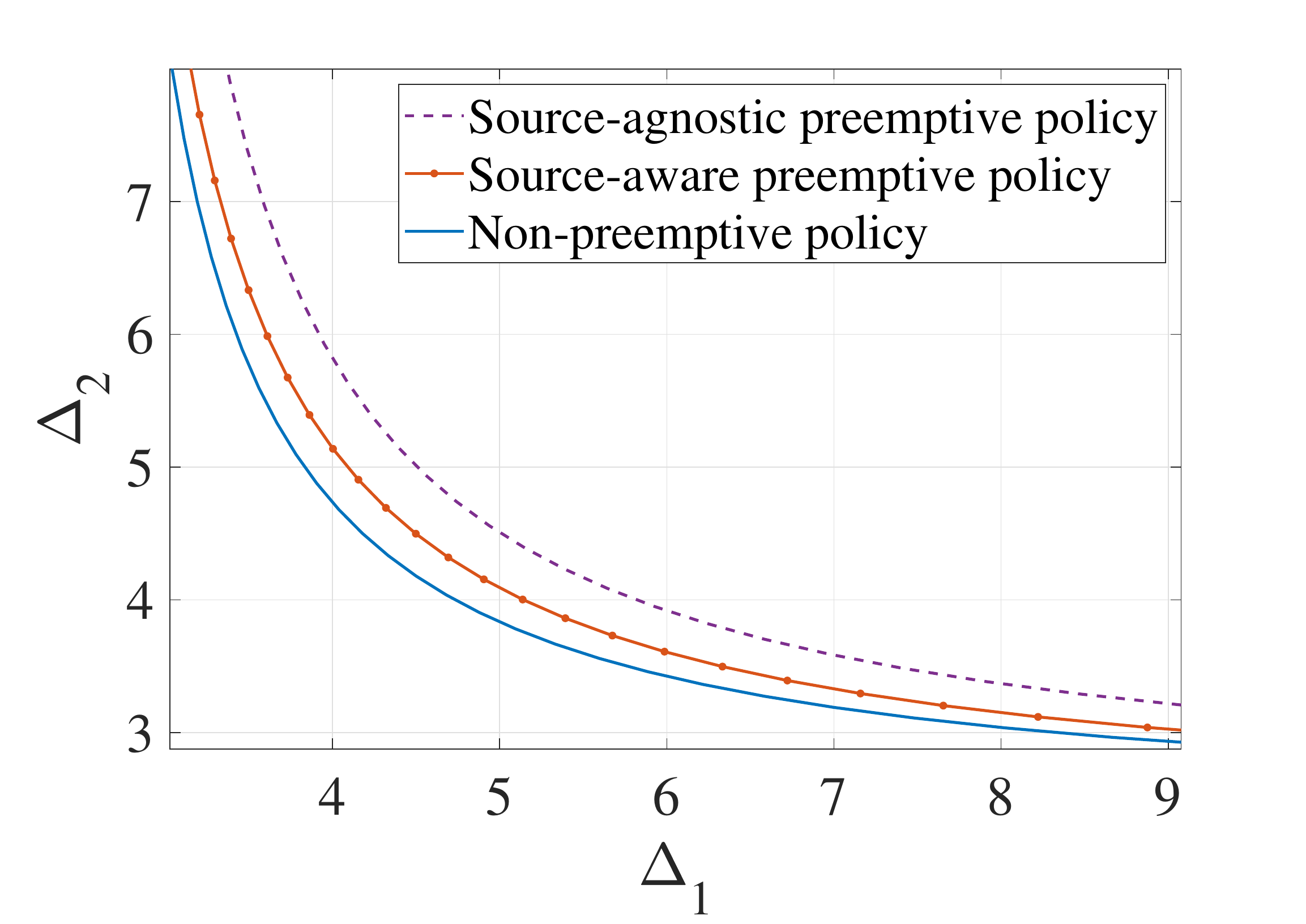}
\label{Gamma_10}
}\vspace{-3mm}
\caption{The contours of achievable average AoI  pairs under the gamma distribution for the different sets of parameters with $\mu=1$.}
\label{Gamma_C}
\vspace{-10mm}
\end{figure}




\begin{figure}
\centering
\subfigure[$\alpha=2.4,\omega=0.0583$ ]{
\includegraphics[width=0.46\textwidth]{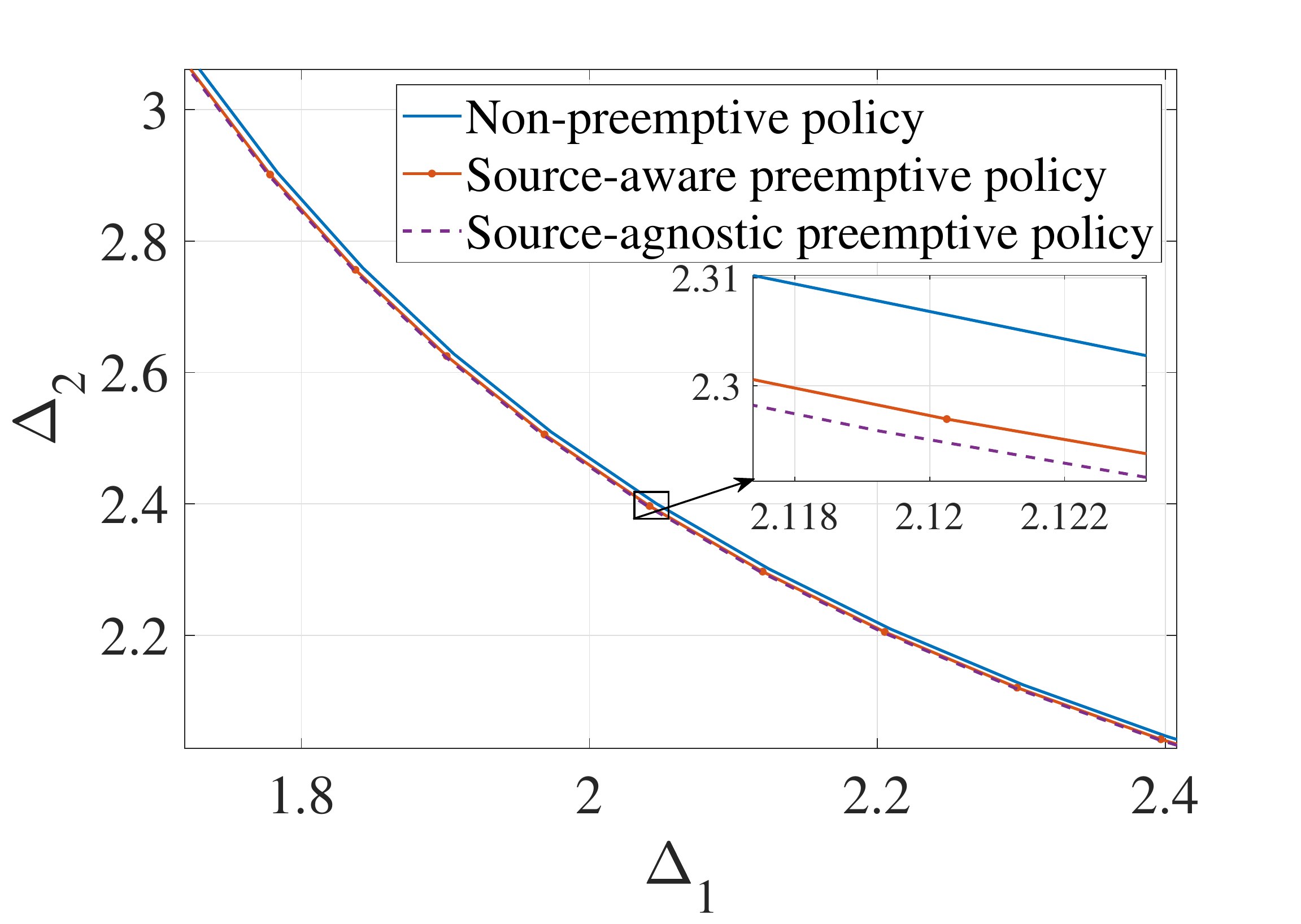}
\label{Pareto_onetenth_n}
}
\subfigure[$\alpha=2.7,\omega=0.630$ ]
{
\includegraphics[width=0.46\textwidth]{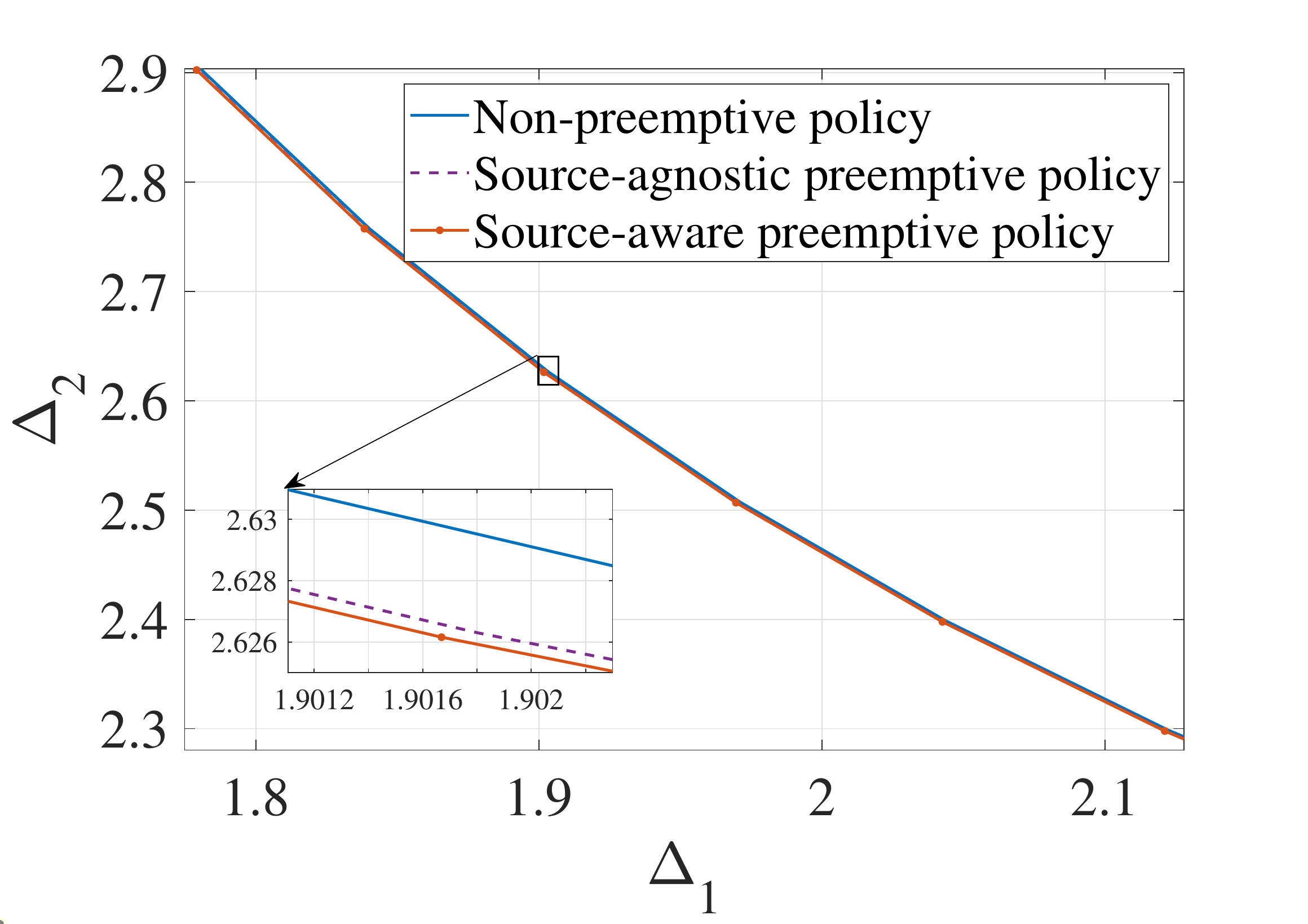}
\label{Pareto_1_n}
}
\subfigure[$\alpha=4,\omega=0.750$]{
\includegraphics[width=0.48\textwidth]{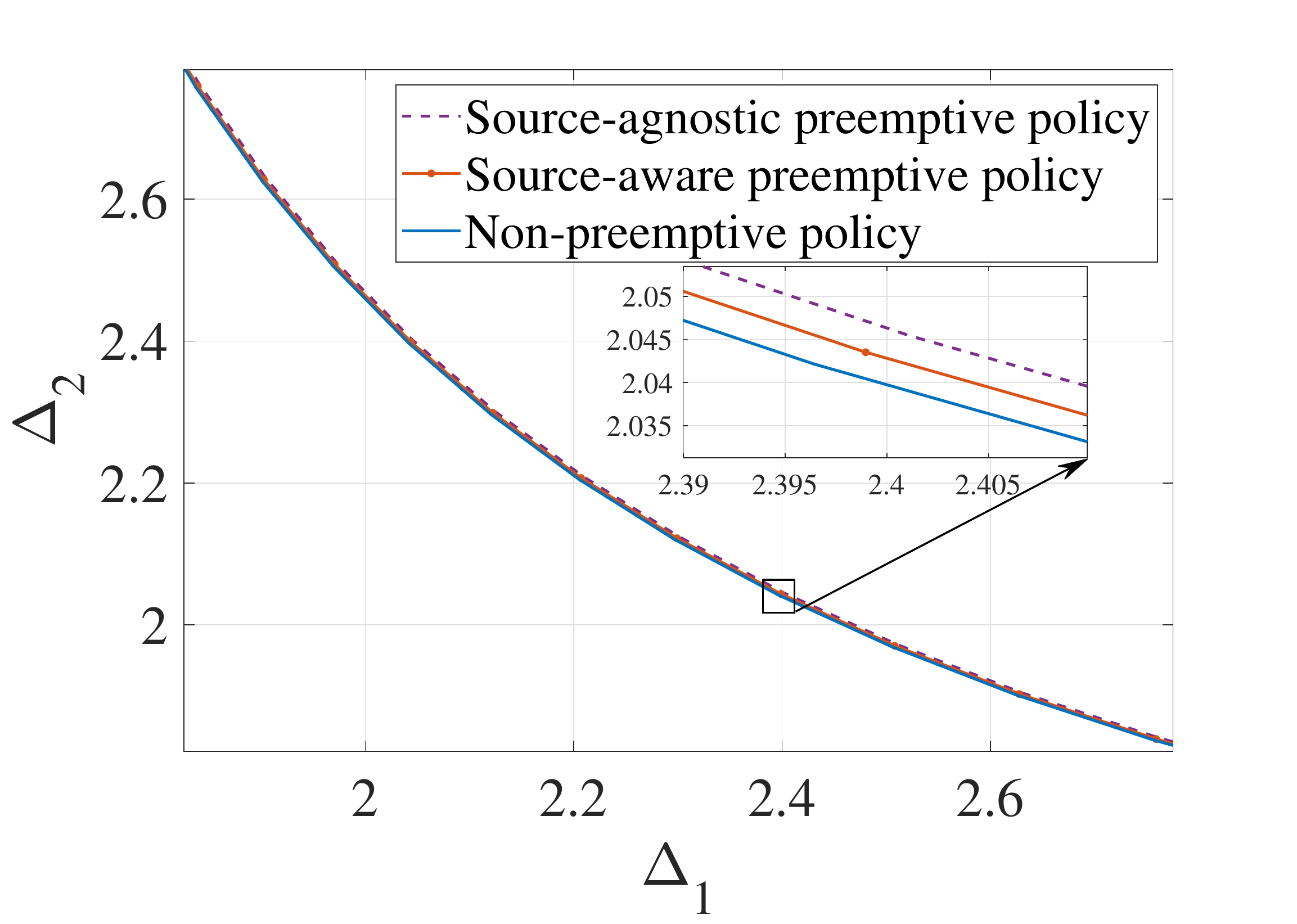}
\label{Pareto_2_n}
}\vspace{-3mm}
\caption{The contours of achievable average AoI  pairs under the Pareto distribution for the different sets of parameters with $\mu=1$.}
\label{Pareto_C}
\vspace{-10mm}
\end{figure}




\subsection{Standard Deviation of the AoI}
Fig. \ref{SD_C}  depicts the average AoI of source 1 and its standard deviation ($\sigma$) as a function of $\lambda_1$ under the gamma distribution\footnote{It is worth noting that since the MGF of the Pareto distribution does not exist,  the standard deviation of the AoI under the Pareto distribution can not be derived.} with parameters $\kappa=2,~\beta=1,~\mu=0.5$ (Fig.~\ref{SDbeta1}) and $\kappa=2,~\beta=4,~\mu=2$ (Fig.~ \ref{SDbeta4}). The standard deviation measures the dispersion of the values of the AoI relative to its mean; {we show this by the curves $\Delta_1+\sigma$ and $\Delta_1-\sigma$}. The figure exemplifies that the standard deviation of the AoI might have a large value {even though the average AoI remains low. For example, while the average AoI performance of the non-preemptive policy is inferior to the other two policies for smaller arrival rates (around ${\lambda_1<0.62}$), the non-preemptive policy results in the least variation of the AoI around its mean for all arrival rates. This demonstrates that the average AoI does not provide complete characterization for the information freshness and thus, higher moments of the AoI need to taken into account when designing and evaluating a reliable status update system. Indeed, besides the requirement of a low average AoI value, maintaining low variation of the AoI values is crucial for time-critical applications. }



\begin{figure}
\centering
\subfigure[$\kappa=2$ and $\beta=1$.]{
\includegraphics[width=0.58\textwidth]{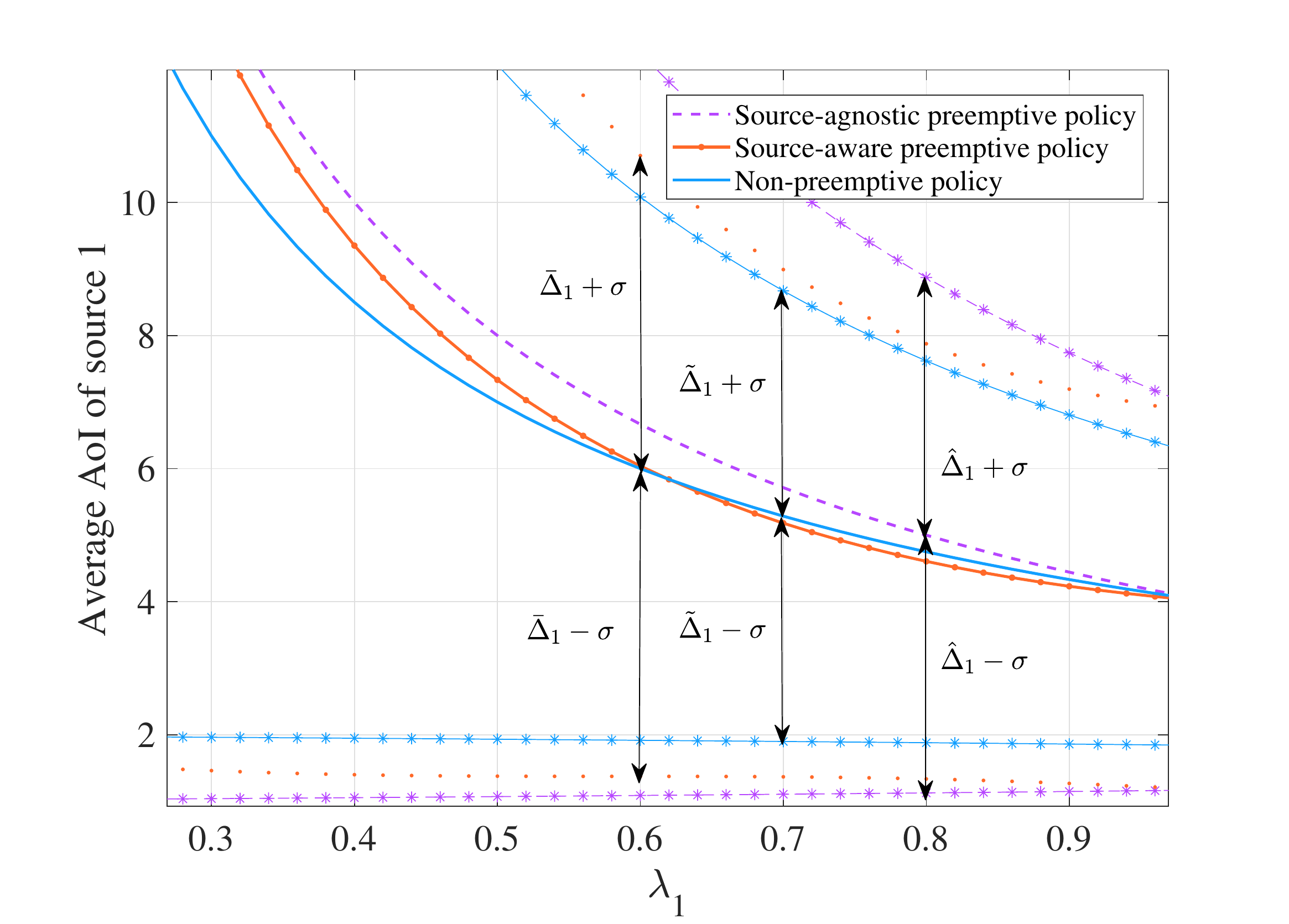}
\label{SDbeta1}
}\vspace{-2mm}
\subfigure[$\kappa=2$ and $\beta=4$.]
{
\includegraphics[width=0.56\textwidth]{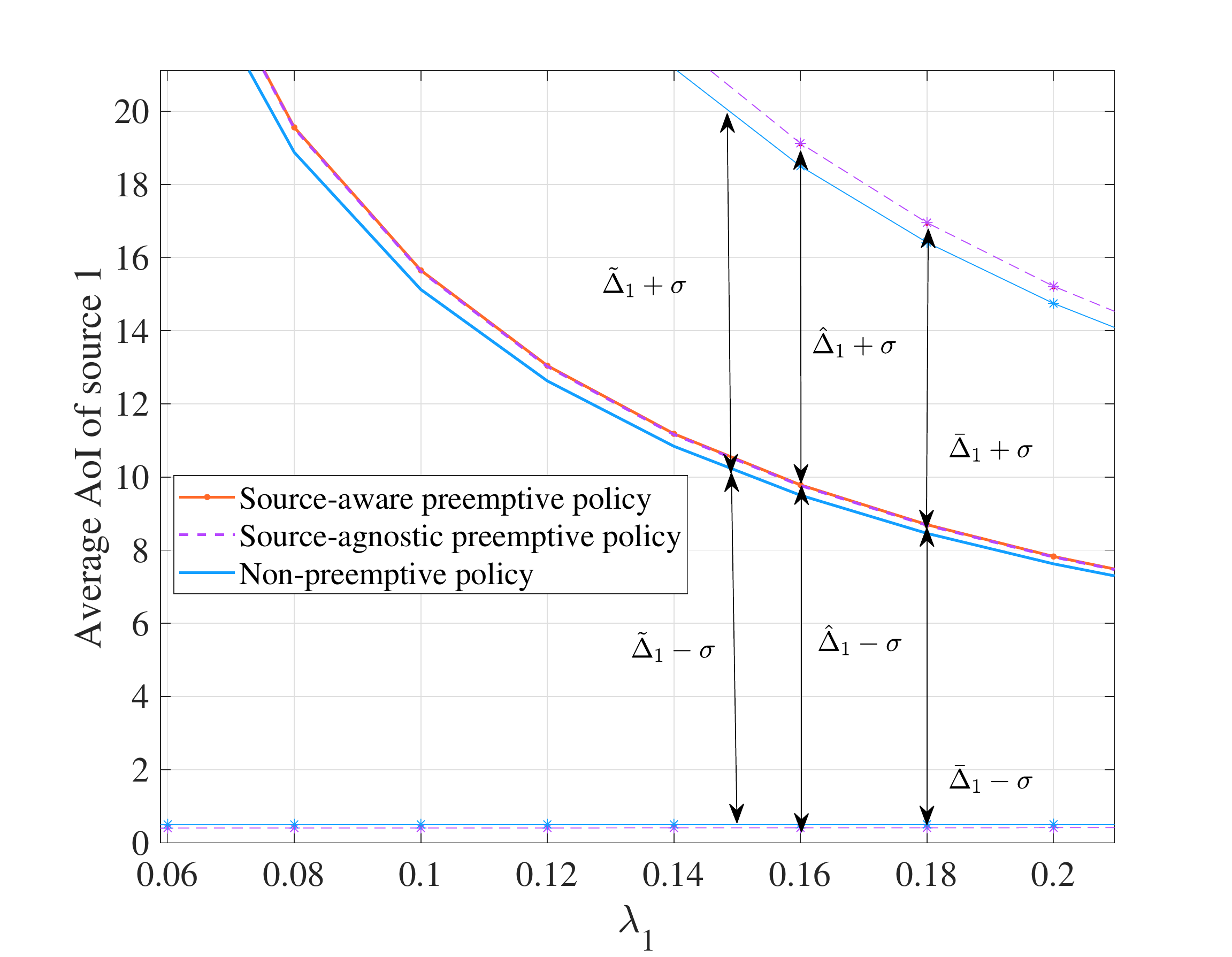}
\label{SDbeta4}
}\vspace{-3mm}
\caption{The average AoI of source 1 and its standard deviation ($\sigma$)
as a function of $\lambda_1$ under  the gamma  distribution.}
\label{SD_C}
\vspace{-10mm}
\end{figure}

\subsection{Sum Average AoI}
{Fig.\ \ref{Sumbeta40}} depicts the sum average AoI, ${\Delta_1+\Delta_2}$, under the  gamma distribution as a function of parameter $\kappa$ with $\beta=1$.
{Fig.\ \ref{Sumparetobar10}} depicts the sum average AoI under the  Pareto distribution as a function of parameter $\alpha$ with $\omega=1$. 
{Fig.\ \ref{Sumbeta4}} illustrates the sum average AoI  under the gamma distribution with ${(\kappa=2,\beta=4)}$, and 
{Fig. \ref{Sumparetobar1}} illustrates the sum average AoI under the Pareto distribution with ${(\alpha=2.2,\omega=1)}$. 
{Similar to the observations made above,} Figs.\ \ref{Sum_C2} and \ref{Sum_C1} exemplify that we can find a parametrization of the gamma and Pareto distributed service times so that each of the three policies, in turn, outperforms the others.






\begin{figure}[t]
\centering
\subfigure[Gamma distribution with $\beta=1$.]
{
\includegraphics[width=0.46\textwidth,height=0.35\textwidth]{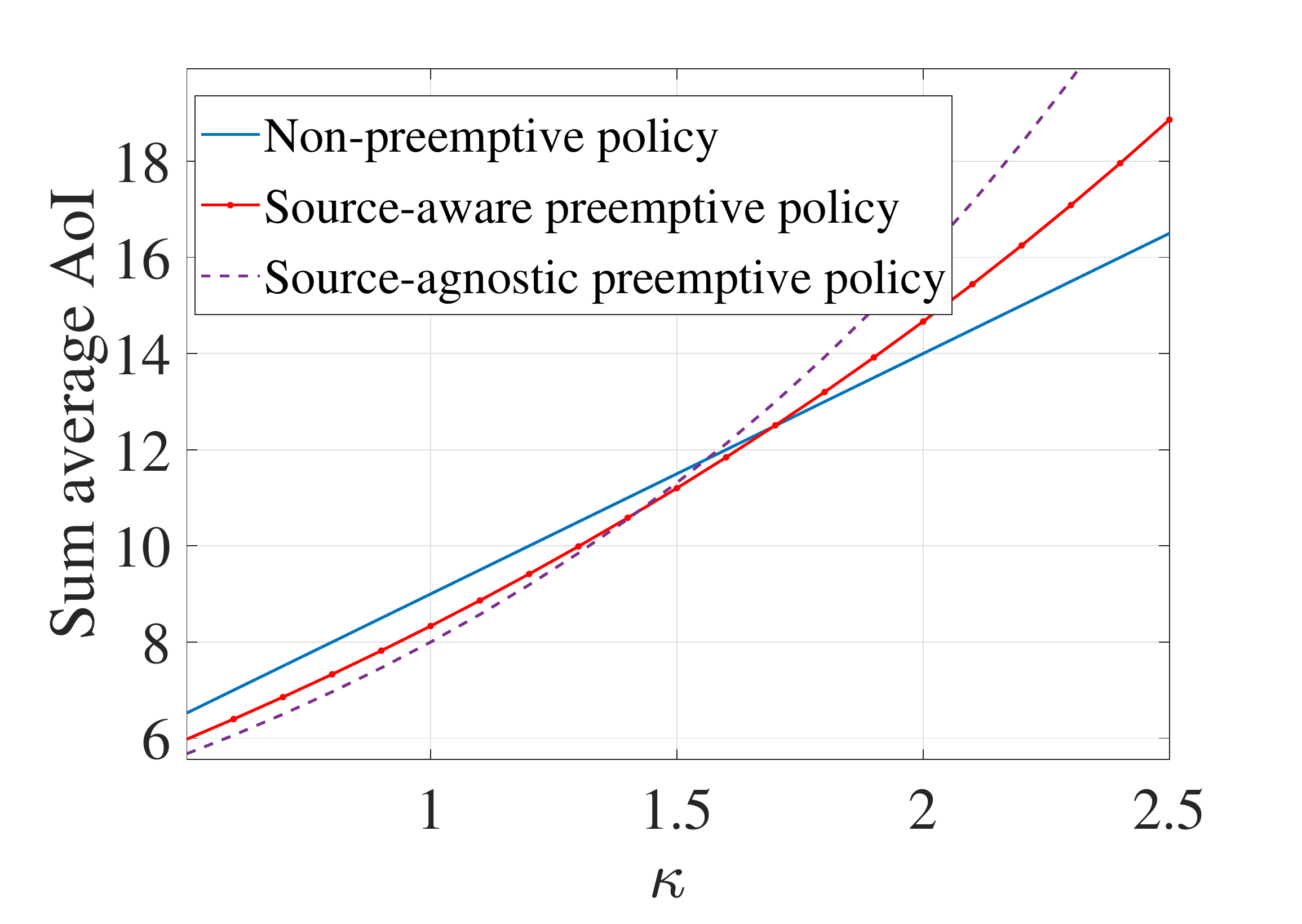}
\label{Sumbeta40}
}
\subfigure[  Pareto distribution  with $\omega=1$.]{
\includegraphics[width=0.46\textwidth,height=0.35\textwidth]{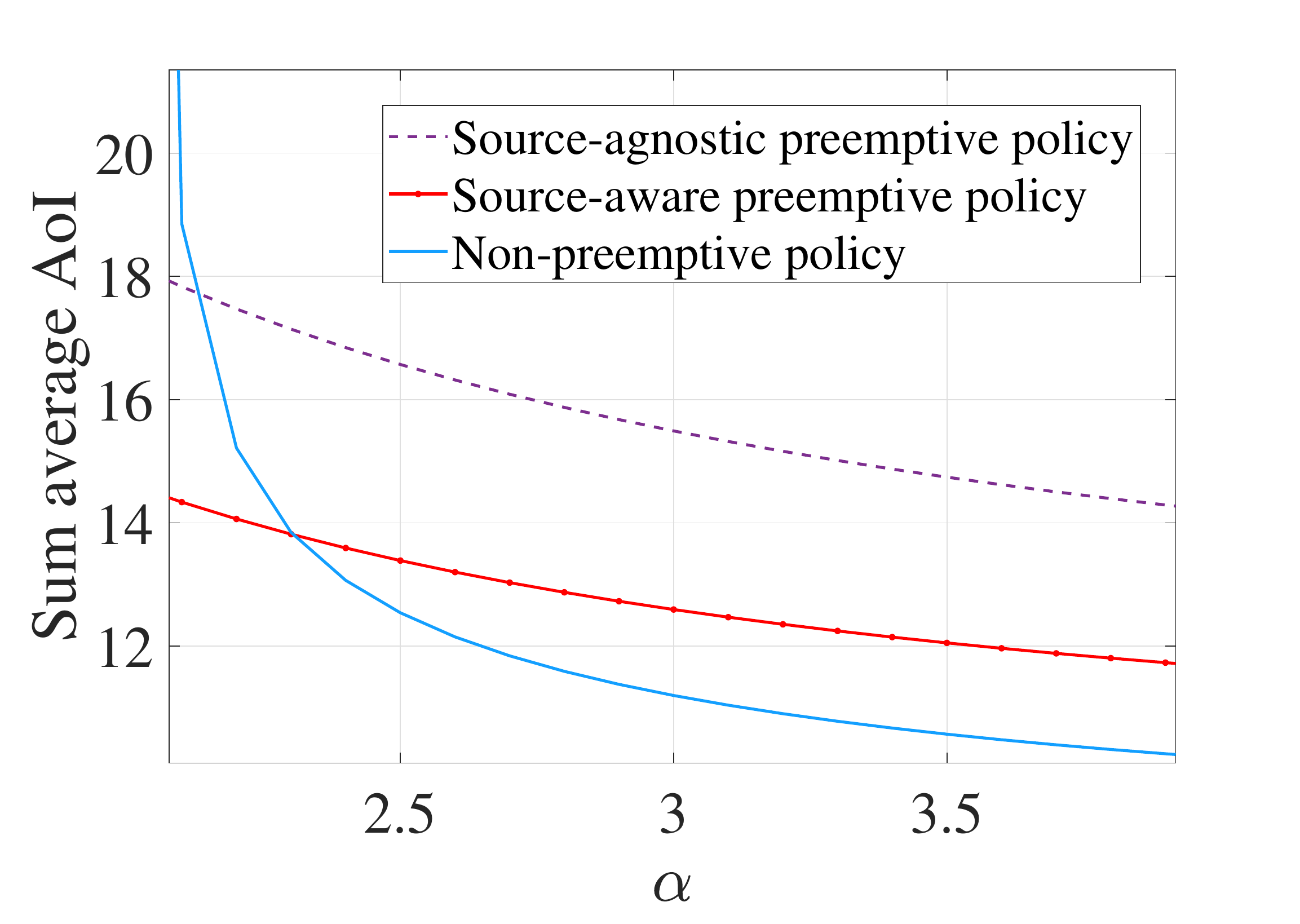}
\label{Sumparetobar10}
}\vspace{-3mm}
\caption{Sum average AoI under gamma and Pareto distributions with $\lambda=0.5$.}
\label{Sum_C2}
\vspace{-10mm}
\end{figure}

\begin{figure}[t]
\centering
\subfigure[Gamma distribution with $(\kappa=2,\beta=4)$.]
{
\includegraphics[width=0.46\textwidth,height=0.35\textwidth]{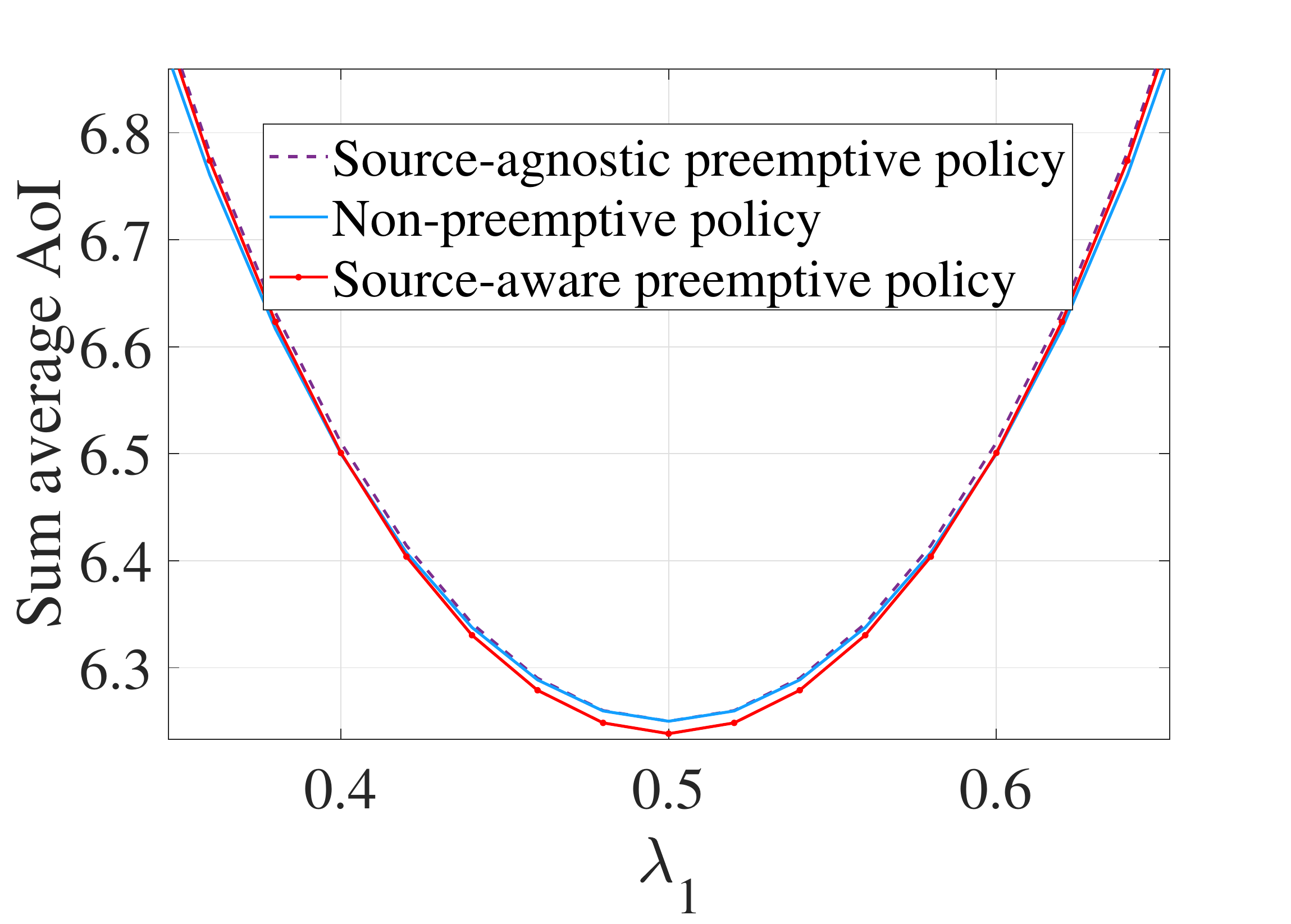}
\label{Sumbeta4}
}
\subfigure[Pareto distribution with $(\alpha=2.2,\omega=1)$.]{
\includegraphics[width=0.46\textwidth,height=0.35\textwidth]{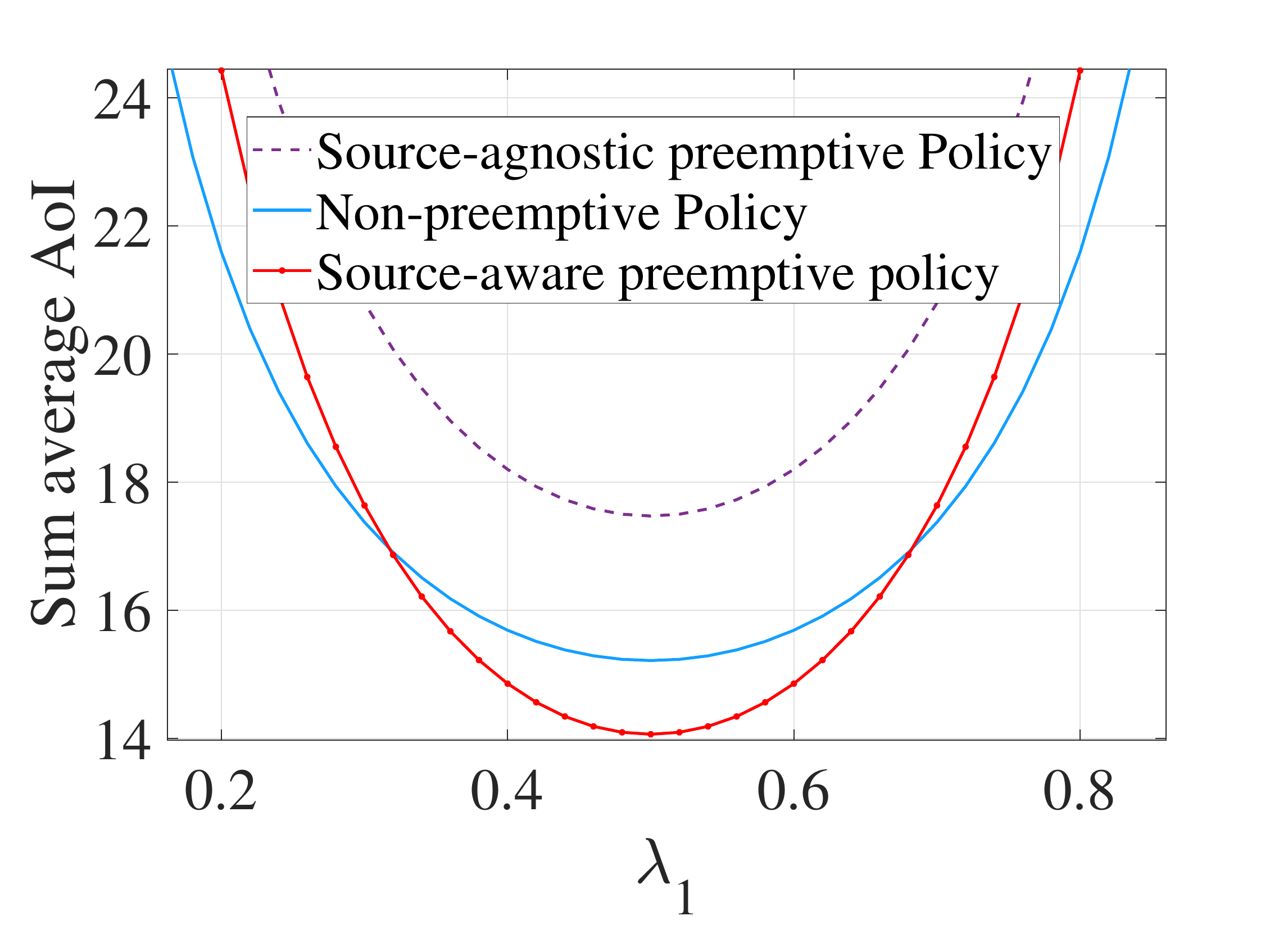}
\label{Sumparetobar1}
}\vspace{-3mm}
\caption{Sum average AoI as a function of $\lambda_1$.}
\label{Sum_C1}
\vspace{-12mm}
\end{figure}

\section{Conclusions}\label{Conclusions}
We derived the MGFs of the AoI and  peak AoI in a multi-source M/G/1/1 queueing model under the proposed source-aware preemptive packet management policy and the source-agnostic preemptive and non-preemptive policies studied earlier. Using the derived MGFs, we derived the average AoI and average peak AoI in a two-source M/G/1/1 queueing system under the three packet management policies.
The numerical results showed that, depending on the system parameters, i.e., the packet arrival rates and the distribution of the service time, each policy can outperform the others. In particular, for a given service rate, when the tail of the service time  distribution is sufficiently heavy, the source-agnostic preemptive policy is the best policy, whereas for a sufficiently light tailed distribution, the non-preemptive policy is the best one. In addition, by visualizing the standard deviation of the AoI, the results demonstrated that the average AoI falls short in thoroughly characterizing the information freshness so that higher moments of the AoI need to be taken into account for the design of reliable status update systems.





\vspace{-5mm}
\bibliographystyle{IEEEtran}
\bibliography{Bibliography}
\end{document}